\newtheorem {thm}{Theorem}[section]
\newtheorem {lem}[thm]{Lemma}
\newcommand{\CH}{\mbox{$\cal{H}$}}
\newcommand{\SHF}{\mbox{$\mathsf{SHF}$}}
\newcommand{\FPC}{\mbox{$\mathsf{FPC}$}}
\newcommand{\T}{\mbox{${\mathsf T}$}} 
\newcommand{\Z}{\mbox{${\mathsf Z}$}}
\newcommand{\A}{\mbox{$\mathsf{A}$}}
\newcommand{\B}{\mbox{$\mathsf{B}$}}
\newcommand{\C}{\mbox{$\mathsf{C}$}}
\begin{document}

\title{On tight bounds for binary frameproof codes}
\author[1]{Chuan Guo}
\author[2]{Douglas~R.~Stinson\thanks{D.~Stinson's research is supported by NSERC discovery grant 203114-11.}}
\author[3]{Tran van Trung}
\affil[1,2]{David R.\ Cheriton School of Computer Science, University of Waterloo,
Waterloo, Ontario, N2L 3G1, Canada}
\affil[3]{Institut f\"ur  Experimentelle Mathematik,
             Universit\"at Duisburg-Essen,
             Ellernstrasse 29,
             45326 Essen, Germany}
\date{\today}

\maketitle


\begin{abstract}
In this paper, we study $w$-frameproof codes, which are equivalent
to $\{1,w\}$-separating hash families. Our main results concern
binary codes, which are defined over an alphabet of two symbols.
For all $w \geq 3$, and for $w+1 \leq N \leq 3w$, we show that
an $\SHF(N; n,2, \{1,w \})$ exists only if $n \leq N$, and an
$\SHF(N; N,2, \{1,w \})$ must be a permutation matrix 
of degree $N$.
\end{abstract}

\section{Introduction}
Let $Q$ be a finite set of size $q$ and let $N$ be 
a positive integer. A subset $ C \subseteq Q^N$ with
$|C|=n$ is called $C$ an $(N,n,q)$ code. The elements of $C$
are called codewords.  Each codeword $x \in C$ is of the form
$x=(x_1, \ldots, x_N)$, where $x_i \in Q$, $1 \leq i \leq N$.  
For any subset of codewords $ P \subseteq  C$,
 the set of {\em descendants} of  $P$, denoted
 $\mbox{desc}(P)$, is defined by
 \[\mbox{desc}(P) = \{ x\in Q^N: x_i \in \{a_i: a 
  \in P \}, \; 1\leq i\leq N \}. \] 
Let $C$ be an $(N,n,q)$ code and let $w \geq 2$ be an integer.
$C$ is called a {\em w-frameproof code} ($w-\FPC$) 
if for all $P \subseteq C$
with $|P| \leq w$, we have that $\mbox{desc}(P) \cap C=P$. 
Frameproof codes with were first introduced by Boneh and Shaw 
\cite{Boneh1998},
for use in fingerprinting of digital data to prevent a small
of coalition of legitimate users from constructing a copy of
fingerprint of another user not in the coalition. 
Frameproof codes
and their applications
have been studied extensively, see for instance, 
\cite{Boneh1998}, \cite{FiatTassa1999}, 
\cite{ChorFiatNaor1994}, \cite{StaddonStinsonWei2001}, 
\cite{STW2000}, \cite{SarkarStinson2001}, \cite{Blackburn2003},
\cite{Colbourn2010}. 
 One of the basic problems
is the studying of upper bounds on the cardinality of frameproof codes.
Many strong bounds have been obtained in the papers 
\cite{StaddonStinsonWei2001}, \cite{SarkarStinson2001}, 
\cite{Blackburn2003}, \cite{TvT2013} for the case $q \geq w$.

Much less is known about upper bounds for frameproof codes 
when $q < w$. Our goal in the present paper is to study
upper bounds for binary frameproof codes, i.e., codes for  $q=2$.
 
It turns out that frameproof codes are a special
type of separating hash families (\SHF).
Let $h$ be a function from a set $X$ to a set $Y$ and let 
$C_1, C_2, \ldots, C_t \subseteq X$ be $t$ pairwise disjoint
subsets. We say that
$h$ {\it separates} $C_1, C_2, \ldots, C_t$ if 
$h(C_1), h(C_2), \ldots, h(C_t)$ are pairwise disjoint.
Let $|X|=n$ and $|Y|=q$. We call a set $\CH$ of $N$ functions from
$X$ to $Y$ an 
$(N; n,q, \{w_1, \ldots, w_t \})$-{\it separating hash family},
denoted by $\SHF(N; n,q, \{w_1, \ldots, w_t \})$, 
if for all 
pairwise disjoint subsets $C_1, \ldots, C_t \subseteq X$
with $|C_i|=w_i$, for $i=1, \ldots, t$, there exists at least
one function $h\in \CH$ that separates  $C_1, C_2, \ldots, C_t$.
The multiset  $\{w_1, w_2, \ldots, w_t \}$ is the {\it type}
of the separating hash family. 
Frameproof codes and separating hash families
have the following connection. An $(N,n,q)$ $w$-frameproof codes
exists if and only if an $\SHF(N; n,q, \{1, w \})$ exists.  
We include a lemma in section 2 proving this statement. As it
is more convenient to work with separating hash families,
we will prove the results in this paper in terms of separating
hash families.

It is often useful to present an 
$\SHF(N; n,q, \{w_1, \ldots, w_t \})$ as an $N\times n$ matrix on
$q$ symbols, say $\A$. The rows of $\A$ correspond to the hash
functions in the family, the columns correspond to the elements
in the domain $X$, and the entry in row $f$ and column $x$
is $f(x)$. We call $\A$ the matrix representation of the hash family. 
The matrix $\A$ has the following property. For given disjoint sets
of columns $C_1, C_2, \ldots, C_t$ with $|C_i|=w_i$, $1\leq i\leq t$,
there exists at least one row $f$ of $\A$ such that 
\[\{ \A(f,x): \; x \in C_i\} \cap \{ \A(f,x): \; x \in C_j\}
 = \emptyset , \]
for all $i\not=j$, i.e. row $f$ separates the column sets
$C_1, C_2, \ldots, C_t$.
Now if we write the codewords of an $(N,n,q)$ w-frameproof code
column-wise as an $N\times n$ matrix $\A$, i.e. each
codeword is a column of $\A$, then $\A$ is the matrix
representation of an $\SHF(N;n,q, \{1,w\})$. 
The problem of determining an upper bound on the cardinality
of an $(N,n,q)$ $w$-frameproof code becomes the problem
of determining an upper bound on the number of columns of $A$
for given $N$, $q$, and $w$.
 
For the case when $q \geq w$, several strong results 
have been obtained for $w$-frameproof codes. For example,
when $N \leq w$, it has been shown that $n\leq w(q-1)$, see
\cite{StaddonStinsonWei2001}, \cite{Blackburn2003}.
When $N > w$, strong upper bounds are obtained in
\cite{StaddonStinsonWei2001}, \cite{Blackburn2003},
\cite{BazTran2011}, \cite{TvT2013}. 
Here are these bounds.

\begin{thm}[\cite{StaddonStinsonWei2001}]\label{SSW_bound}
In an $(N,n,q)$ $w$-frameproof code, the following bound holds:
 $$ n \leq w(q^{\lceil {N\over w}\rceil}-1).$$
\end{thm}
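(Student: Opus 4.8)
The plan is to prove this directly for $w$-frameproof codes by a block-partition and counting argument. Let $C$ be an $(N,n,q)$ $w$-\FPC{} and put $k = \lceil N/w \rceil$, so that $wk \geq N$. First I would partition the coordinate set $\{1,\dots,N\}$ into $w$ blocks $I_1,\dots,I_w$ with $|I_\ell| \leq k$ for every $\ell$; this is possible precisely because $wk \geq N$. For a codeword $x$ and a block $I_\ell$ write $x|_{I_\ell}$ for the restriction of $x$ to the coordinates in $I_\ell$, and call $x$ \emph{distinguished on $I_\ell$} if $x|_{I_\ell} \neq y|_{I_\ell}$ for every other codeword $y \in C$.

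The conceptual heart of the argument is the claim that every codeword is distinguished on at least one block. I would prove this by contradiction: if $x$ is distinguished on no block, then for each $\ell$ there is some $y^\ell \in C \setminus \{x\}$ with $y^\ell|_{I_\ell} = x|_{I_\ell}$. Set $P = \{y^1,\dots,y^w\}$; this is a set of size between $1$ and $w$ that does not contain $x$. Every coordinate $i$ lies in some block $I_\ell$, and there $x_i = y^\ell_i \in \{a_i : a \in P\}$, so $x \in \mathrm{desc}(P)$. But then $x \in \mathrm{desc}(P) \cap C$ while $x \notin P$, contradicting the frameproof property. The only subtlety here is that the $y^\ell$ need not be distinct, so one must treat $P$ as a set, which is harmless.

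Now let $U_\ell$ be the set of codewords distinguished on $I_\ell$. By the claim $C = \bigcup_{\ell=1}^{w} U_\ell$, hence $n \leq \sum_\ell |U_\ell|$; and since two distinguished codewords with the same restriction to $I_\ell$ must coincide, the map $x \mapsto x|_{I_\ell}$ is injective on $U_\ell$, giving $|U_\ell| \leq q^{|I_\ell|} \leq q^k$. This already yields $n \leq wq^k$, and the remaining work is to save the additive $w$. I would split into cases: if $n \leq q^k$ we are done, since $q^k \leq w(q^k-1)$ whenever $w \geq 2$ and $q^k \geq 2$; otherwise $n > q^k \geq q^{|I_\ell|}$ for every $\ell$, and a pigeonhole count on block $I_\ell$ improves the bound to $|U_\ell| \leq q^{|I_\ell|} - 1$ — if all $q^{|I_\ell|}$ restriction values were realised by distinguished codewords, every restriction class would be a singleton and $C$ would have only $q^{|I_\ell|} < n$ codewords. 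Summing over $\ell$ then gives $n \leq \sum_\ell (q^{|I_\ell|} - 1) \leq w(q^k - 1)$. The main obstacle is exactly this last refinement; the rest is the familiar ``each codeword is isolated on some coordinate block'' principle. One could equally run the whole argument in the language of $\SHF(N;n,q,\{1,w\})$ via the equivalence recorded in Section~2, but the descendant computation reads most cleanly in the frameproof-code formulation.
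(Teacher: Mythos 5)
Your proof is correct, and since the paper only cites this bound from \cite{StaddonStinsonWei2001} without reproducing a proof, there is nothing in the paper itself to diverge from: your block-partition argument (every codeword must be distinguished on some block, else it lies in $\mathrm{desc}(P)$ for a coalition $P$ of size at most $w$ not containing it) is essentially the standard proof from that reference. The case split at the end --- handling $n \leq q^{\lceil N/w\rceil}$ directly via $w \geq 2$, and otherwise showing $|U_\ell| \leq q^{|I_\ell|}-1$ because a full set of distinguished restrictions would force $C = U_\ell$ --- correctly accounts for the $-1$ in the bound, which is the only delicate point.
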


\begin{thm}[\cite{Blackburn2003}]\label{Blackburn_bound}
Let $N$, $q$, $w$ and $d$ be positive integers such that
$N=wd+1$, $w\geq 2$ and $q\geq w$. 
Suppose there is an $(N,n,q)$ $w$-frameproof
code. Then $n\leq q^{d+1}+ O(q^d)$.
\end{thm}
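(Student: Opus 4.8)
The plan is to prove this by a partition-and-counting argument, exploiting the congruence $N\equiv 1\pmod w$ to choose a deliberately \emph{uneven} partition of the coordinate set; this lopsidedness is exactly what cuts the leading constant from the $w$ of Theorem~\ref{SSW_bound} down to $1$. I would work with the matrix representation $\A$ of the $w$-frameproof code, so that codewords are columns and coordinates index the rows. Since $N=wd+1=(d+1)+(w-1)d$, partition the row set $[N]$ into blocks $B_1,\dots,B_w$ with $|B_1|=d+1$ and $|B_2|=\dots=|B_w|=d$. For a codeword $c$ and a block $B_j$ write $c|_{B_j}$ for the restriction of $c$ to the coordinates in $B_j$.

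The main step is the following claim: every codeword is \emph{$B_j$-isolated} for at least one $j$, where $c$ is $B_j$-isolated if no other codeword agrees with $c$ on all of $B_j$ (i.e.\ $c|_{B_j}\ne c'|_{B_j}$ for every $c'\in C\setminus\{c\}$). Indeed, if some $c$ is not $B_j$-isolated for any $j$, then for each $j$ I can pick $p^{(j)}\in C\setminus\{c\}$ with $p^{(j)}|_{B_j}=c|_{B_j}$. Setting $P=\{p^{(1)},\dots,p^{(w)}\}$ gives $1\le|P|\le w$ and $c\notin P$, and since the $B_j$ cover $[N]$, every coordinate $c_i$ equals $p^{(j)}_i$ for the block $B_j\ni i$, so $c\in\mathrm{desc}(P)$. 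Then $c\in\mathrm{desc}(P)\cap C\setminus P$, contradicting the frameproof property $\mathrm{desc}(P)\cap C=P$.

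The second step is a counting estimate. For a fixed $j$, the map $c\mapsto c|_{B_j}$ is injective on the set of $B_j$-isolated codewords (that is what ``isolated'' means), so there are at most $q^{|B_j|}$ of them. Since by the claim $C$ is covered by these $w$ sets,
\[
 n=|C|\le \sum_{j=1}^{w} q^{|B_j|} = q^{d+1}+(w-1)\,q^{d},
\]
which is $q^{d+1}+O(q^d)$. (When $q\ge w$ one even gets $n<2q^{d+1}$, but the stated form is all that is claimed.)

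I do not expect a genuine obstacle here: the content is the choice of the lopsided partition together with the observation that a codeword ``escaping'' on every block immediately violates the defining property. The only points needing a little care are the bookkeeping in the descendant step ($|P|\le w$ and $c\notin P$, both immediate) and the degenerate small cases ($d=0$, or blocks of size $0$ when $N$ is small relative to $w$), where the same inequality persists but some blocks contribute only the trivial bound $q^{0}=1$. Obtaining a sharper explicit constant than $O(q^d)$ would require extra work — for instance analysing how isolated codewords in different blocks can overlap — but that refinement is unnecessary for the statement as given.
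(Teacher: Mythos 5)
Your argument is correct, and it is worth noting that the paper itself offers no proof of this statement at all --- Theorem~\ref{Blackburn_bound} is quoted from \cite{Blackburn2003} as a known result --- so the only meaningful comparison is with Blackburn's original argument. Your route is the natural refinement of the proof of Theorem~\ref{SSW_bound}: partition the $N$ coordinates into $w$ blocks, observe that a codeword failing to be uniquely determined on every block lies in $\mbox{desc}(P)\setminus P$ for a set $P$ of at most $w$ other codewords (contradicting the frameproof property), and then count isolated codewords block by block. The single new idea is the lopsided partition $(d+1, d, \dots, d)$, which is available precisely because $N\equiv 1\pmod w$, and it correctly drops the leading constant from $w$ (as in $w(q^{\lceil N/w\rceil}-1)$) to $1$, giving $n\le q^{d+1}+(w-1)q^{d}$. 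All the steps check out: $1\le |P|\le w$, $c\notin P$, the restriction map is injective on $B_j$-isolated codewords, and since $d\ge 1$ no block is empty, so the degenerate cases you mention do not actually arise under the stated hypotheses. Blackburn's own proof is considerably more elaborate because he treats arbitrary $N$ (where an uneven partition cannot produce a single dominant block for every residue class of $N$ bmod $w$) and extracts sharper lower-order terms; your argument buys simplicity and self-containment at the cost of covering only the case $N=wd+1$, which is exactly the case asserted here. A small bonus of your proof is that it nowhere uses the hypothesis $q\ge w$, so the bound $n\le q^{d+1}+(w-1)q^{d}$ holds for all $q\ge 2$ --- consistent with the paper's later observation that it is the \emph{tight} bound $n\le q^{d+1}$ of Theorem~\ref{Tran_bound}, not this one, that fails when $q<w$.
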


\begin{thm}[\cite{TvT2013}] \label{Tran_bound}
Let $d$, $q$, $w$ be positive integers such that
$q \geq w \geq 2$. Suppose there exists an $(N,n,q)$ $w$-frameproof 
code with $N=wd+1$. Then $ n \leq q^{d+1}.$  
\end{thm}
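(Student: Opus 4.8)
The plan is to work throughout with the matrix representation $\A$ of the equivalent $\SHF(N;n,q,\{1,w\})$, in which the $N$ rows are the coordinates and the $n$ columns are the codewords, and to phrase the frameproof property as a covering condition: a set $B$ of columns \emph{covers} a column $c$ if in every row some column of $B$ agrees with $c$, so that $\A$ is $w$-frameproof exactly when no column is covered by $w$ other columns. First I would establish the basic engine, a partition lemma in the spirit of Theorem~\ref{SSW_bound}: fix any partition of the $N$ rows into $w$ groups $P_1,\dots,P_w$, and call a column $c$ \emph{private on $P_j$} if no other column agrees with $c$ on all rows of $P_j$. If some column $c$ were non-private on every group, witnessed by columns $a_1,\dots,a_w$ with each $a_j\neq c$ matching $c$ throughout $P_j$, then $\{a_1,\dots,a_w\}$ would cover $c$ (every row lies in some $P_j$, where $a_j$ matches $c$), contradicting the frameproof property. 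Hence every column is private on at least one group.

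From this engine a crude bound follows immediately. Writing $N=wd+1$ and splitting the rows as evenly as possible forces exactly one group of size $d+1$ and $w-1$ groups of size $d$. The map sending each column $c$ to the pair consisting of a group on which it is private together with its pattern on that group is injective, since two columns that are both private on the same group cannot share a pattern there. Counting the available patterns group by group then yields the crude estimate $n\le q^{d+1}+(w-1)q^{d}$.

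The real work, and the step I expect to be the main obstacle, is to delete the surplus term $(w-1)q^{d}$ and reach the clean bound $n\le q^{d+1}$; this is precisely where the hypothesis $q\ge w$ and the single extra coordinate in $N=wd+1$ must enter, since a single partition into $w$ groups structurally cannot beat the crude bound. My proposed route is to run the engine over many partitions at once rather than one. Concretely, divide $wd$ of the rows into $w$ blocks $B_1,\dots,B_w$ of size $d$ together with one distinguished row $r^\ast$, and for each $i$ consider the partition whose large group is $B_i\cup\{r^\ast\}$ and whose remaining groups are the $B_j$ with $j\neq i$. Applying the engine to all $w$ of these partitions shows that a column that fails to be private on any size-$d$ block must be private on every enlarged block $B_i\cup\{r^\ast\}$. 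I would then use $q\ge w$ to control how the value in row $r^\ast$ interacts with the $w$ block-patterns, amortizing the $w$ overlapping privacy conditions into a single injection of all columns into the set of $q^{d+1}$ patterns.

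The crux is making that amortization exact. I expect the decisive estimate to come from a pigeonhole on row $r^\ast$: the columns sharing a fixed value in $r^\ast$ form a frameproof code on the remaining $wd$ rows, because once the value in $r^\ast$ is common to all of them the separating row can never be $r^\ast$. Combining this per-value reduction with the multi-partition privacy structure should force each value-class to behave like a code on $w(d-1)+1$ effective rows, setting up an induction on $d$ whose base case $d=0$, namely one row with $n\le q$, is immediate. Closing the induction without reintroducing the factor $w$ is the delicate point, and confirming that $q\ge w$ is exactly what licenses the final collapse is where I would concentrate the effort.
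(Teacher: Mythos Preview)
The paper does not give its own proof of this statement: Theorem~\ref{Tran_bound} is quoted from \cite{TvT2013} and used only as background, so there is no in-paper argument to compare your proposal against. What can be said is that your setup through the crude bound is correct and standard: the partition lemma you describe is exactly the mechanism behind Theorem~\ref{SSW_bound}, and with one block of size $d+1$ and $w-1$ blocks of size $d$ it yields $n\le q^{d+1}+(w-1)q^{d}$ immediately.

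The genuine gap is in the sharpening step, and it is not merely a matter of writing out details. Your proposed induction fixes the value in the distinguished row $r^{\ast}$ and observes that each value-class is a $w$-frameproof code on the remaining $wd$ rows; you then assert that the multi-partition privacy conditions should make each value-class ``behave like a code on $w(d-1)+1$ effective rows.'' But nothing in your outline explains how $w-1$ rows are to be discarded. Privacy on $B_i\cup\{r^{\ast}\}$ for a column in a fixed value-class is the same as privacy on $B_i$ within the full code, not within the value-class, so the multi-partition information does not obviously descend. If instead you apply the crude bound (or Theorem~\ref{SSW_bound}) to each value-class on its $wd$ rows, you get at best $w(q^{d}-1)$ per class and hence $n\le qw(q^{d}-1)$ after summing over the $q$ values, which is worse than $q^{d+1}$ for every $w\ge 2$; the hypothesis $q\ge w$ does not rescue this. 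The actual proof in \cite{TvT2013} does not proceed by this kind of induction on $d$, and closing your induction without reintroducing the factor $w$ would require a new idea you have not yet supplied.
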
 

It should be mentioned that the bound of Theorem \ref{Tran_bound} 
is tight. Note also that when $N=w$ the bound $n\leq w(q-1)$ is tight
as well.

\subsection{Outline of the paper}

In Section \ref{first-interval.sec}, we consider the cases
when $w \geq 3$, $w+1 \leq  N \leq 2w+1$.
In Section \ref{second-interval.sec}, we consider the cases
when $w \geq 4$, $2w+2 \leq  N \leq 3w$.
Section \ref{exceptions.sec} handles the cases $w=3$, $N=8$ and $9$, which were
omitted from the previous section.
Section \ref{w=2.sec} briefly discusses the case $w=2$, and
Section \ref{conclusion.sec} is a conclusion.

\section{Bounds for binary $\FPC$ with $w+1 \leq  N \leq 2w+1$}  
\label{first-interval.sec}    

For the sake of completeness we include the following simple lemma.

\begin{lem}\label{FPC-SHF}
 An $(N,n,q)$ w-frameproof code is equivalent to an
 $\SHF(N;n,q, \{1,w\})$.
\end{lem}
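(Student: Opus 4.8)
The plan is to observe that, under the matrix representation described above, the defining property of a $w$-frameproof code and the defining property of an $\SHF(N;n,q,\{1,w\})$ are two ways of stating the same condition, so the proof reduces to a careful translation between the two languages. Write the code $C$ column-wise as an $N \times n$ matrix $\A$ over $Q$: row $i$ of $\A$ plays the role of the $i$th hash function, and column $x$ plays the role of the codeword $x=(x_1,\dots,x_N)$. I would then prove the two implications separately.

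From a frameproof code to a separating hash family: given disjoint column sets $C_1 = \{x\}$ and $C_2 = P$ with $|P| = w$, the frameproof condition applied to $P$ gives $\mbox{desc}(P)\cap C = P$, and since $x \in C \setminus P$ this forces $x \notin \mbox{desc}(P)$. Unwinding the definition of $\mbox{desc}$, there is a coordinate $i$ with $x_i \notin \{a_i : a \in P\}$, and then $\{\A(i,y) : y \in C_1\} \cap \{\A(i,y) : y \in C_2\} = \{x_i\} \cap \{a_i : a \in P\} = \emptyset$, so row $i$ separates $C_1$ and $C_2$. Hence $\A$ realizes an $\SHF(N;n,q,\{1,w\})$.

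From a separating hash family to a frameproof code: let $P \subseteq C$ with $|P| \le w$. The inclusion $P \subseteq \mbox{desc}(P) \cap C$ is immediate from the definition of $\mbox{desc}$. For the reverse inclusion, suppose $x \in \mbox{desc}(P)\cap C$ with $x \notin P$. If $|P| = w$, then $\{x\}$ and $P$ are disjoint column sets of sizes $1$ and $w$, so the separating-hash-family property yields a row $i$ with $x_i \notin \{a_i : a \in P\}$, which contradicts $x \in \mbox{desc}(P)$. If $|P| < w$, I would first enlarge $P$ to a set $P'$ with $P \subseteq P' \subseteq C \setminus \{x\}$ and $|P'| = w$; then $x \in \mbox{desc}(P) \subseteq \mbox{desc}(P')$, and applying the previous case to $P'$ produces the same contradiction. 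Thus $\mbox{desc}(P) \cap C = P$ for every such $P$, i.e.\ $C$ is a $w$-frameproof code.

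I do not anticipate a genuine obstacle: once the dictionary between the two objects is set up, the argument is routine bookkeeping. The one point needing a little care is the enlargement of $P$ in the second implication when $|P| < w$, which tacitly uses that $C$ has at least $w+1$ codewords — the regime of interest in this paper, and the only one in which the $\SHF(N;n,q,\{1,w\})$ condition is not vacuous.
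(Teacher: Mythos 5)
Your proof is correct and follows essentially the same route as the paper's: both simply unwind the two definitions through the matrix representation, with the frameproof condition $\mathrm{desc}(P)\cap C=P$ translating into the existence of a separating row for $(\{x\},P)$. The paper's version is a terse two-line sketch, whereas you fill in the details --- in particular the enlargement of $P$ to size $w$ when $|P|<w$ and the attendant assumption $n\geq w+1$ (without which the $\SHF$ condition is vacuous while the frameproof condition is not) --- which the paper glosses over.
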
 
\begin{proof}
Let $\A$ be an $N\times n$ matrix having entries from a set
of $q$ symbols. Let $\{c\}$ and $P$ be any given disjoint subsets of 
columns of $\A$ with $|\{c\}|=1$ and  $|P|\leq w$, where $w$
is an integer such that $w\geq 2$. 
We may view $\A$ as an $(N,n,q)$ code whose codewords are
the columns. Assume that $\A$ is an $(N,n,q)$ w-frameproof code.
This is equivalent to say $\mbox{desc}(P) \cap \A = P$. Further, 
$\mbox{desc}(P) \cap \A = P$ is equivalent to the
statement that there is a row $i$ that separates $\{c\}$ and $P$.
The latter says that $\A$ is the matrix representation
of an $\SHF(N;n,q,\{1,w\})$.  
\end{proof}
  
By using Lemma \ref{FPC-SHF} we will prove the results
in terms of separating hash families.

When $q < w$,  
the statement of Theorem \ref{Tran_bound} is no longer valid.
The following construction gives a counter example to Theorem 
\ref{Tran_bound} when $q < w$. 
Let $N$, $q$, $w$ be positive integers such that
$q \geq 2 $. Let $\{0,1, \ldots , q-1\}$ be the symbol set.
Define an $ N  \times N(q-1)$ matrix $\A$
as follows.
\begin{center}
 \[\hspace{-12mm} \A = \hspace{6mm}  
  \overbrace{\hspace{-5mm} \left(     
  \begin{tabular}{ccccccccccccccc}
 1 & $\cdots$ & $q-1$ & 0 & $\cdots$ & 0 & 0 & $\cdots$ & 0 & 0 & $\cdots$ & 0
      & 0 & $\cdots$ & 0  \\
 0 & $\cdots$ & 0 & 1 & $\cdots$ & $q-1$ & 0 & $\cdots$ & 0 & 0 & $\cdots$ & 0  & 0 & $\cdots$ & 0  \\
 $\vdots$ & $\ddots$ & $\vdots$ & $\vdots$ & $\ddots$ & $\vdots$
  & $\vdots$ & $\ddots$ & $\vdots$ & $\vdots$ & $\ddots$ & $\vdots$
  & $\vdots$ & $\ddots$ &  $\vdots$  \\
 0 & $\cdots$ & 0 & 0 & $\cdots$ & 0 & 0 & $\cdots$ & 0 & 1 & $\cdots$ & $q-1$  &  0 & $\cdots$ & 0 \\
 0 & $\cdots$ & 0 & 0 & $\cdots$ & 0 & 0 & $\cdots$ & 0 & 0 & $\cdots$ & 0
 & 1 & $\cdots$ & $q-1$ \\
\end{tabular} \right ) \hspace{-5mm} }^{N(q-1)}  \hspace{5mm} 
\begin{tabular}{c}
\vspace{2mm}\rotatebox{270}{$\overbrace{ \hspace{25mm} }$ }  
 \end{tabular}  
\begin{tabular}{c}
  $N$
\end{tabular}
 \]       
\end{center} 
The matrix $\A$ has the property that for any given column $c$
there exists a row $r$ such that the entry $\A(r,c)$ is unique.
Hence $\A$ is the matrix representation of an 
$\SHF(N; N(q-1), q, \{1,w\})$ for $w \geq 1$. 
Note that this construction
can be found in Blackburn \cite{Blackburn2003} using another 
description, as $q$-ary codes of length $N$ whose all codewords 
have weight exactly 1. If we choose, for example, $w=N-1$ and 
$q < \sqrt{w}$, then we have
 $N(q-1)=(w+1)(q-1) > (q^2+1)(q-1)> q^2$.
Thus $A$ is a counter example to
Theorem \ref{Tran_bound}, that would yield $n \leq q^2$ for this case.

Finding a tight bound for $w$-frameproof codes
with $q < w$ seems to be a challenging problem. In the following 
we focus on the case $q=2$ and prove certain tight bounds 
for binary $w$-frameproof codes 
when their length $N$ is moderate compared to $w$.

For any given frameproof code, we may derive new ones from it by simply
permuting the entries in each row separately, i.e. a permutation of the
elements $1,\ldots,q$. Such codes can be considered to be in the same
equivalence class, and hence we would like to limit ourselves to considering
a fixed representative. In the binary case, we say that an $\SHF(N;n,2,{1,w})$
is in {\it standard form} if every row has at most $n/2$ entries of 1.

We now record a simple fact about the binomial coefficients.

\begin{lem}\label{binomial}
Let $w$, $n$ be positive integers such that $w+1 \leq n$. Then for
$i = 1,2,\ldots,n-w-1$, we have $i{n-i \choose w} > (i+1){n-i-1 \choose w}$
if and only if $(i+1)(w+1) > n+1$. In particular, we have
\begin{eqnarray}\label{eq2}
{n-1 \choose w} > 2 {n-2 \choose w} > 3 {n-3 \choose w} >
 \cdots > j {n-j \choose w}. 
\end{eqnarray}
for $j \leq n-w$ whenever $n \leq 2w$.
\end{lem}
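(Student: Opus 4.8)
The plan is to reduce the two-sided estimate $i\binom{n-i}{w} > (i+1)\binom{n-i-1}{w}$ to an elementary linear inequality in $i$, $w$, $n$ by using the standard ratio identity for binomial coefficients. Since $i$ ranges over $1,\ldots,n-w-1$, we have $n-i-1 \geq w$, so $\binom{n-i}{w}$ and $\binom{n-i-1}{w}$ are both positive and the identity $\binom{m}{w} = \frac{m}{m-w}\binom{m-1}{w}$ applies with $m = n-i$ (note $m - w = n-i-w \ge 1 > 0$). Substituting, the claimed inequality becomes $\frac{i(n-i)}{n-i-w} > i+1$, and since $n-i-w > 0$ throughout the range we may clear the denominator to obtain $i(n-i) > (i+1)(n-i-w)$. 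Expanding both sides and cancelling the common term $i(n-i)$ leaves $0 > n - i - w - iw$, that is, $iw + i + w > n$, which rearranges exactly to $(i+1)(w+1) > n+1$. Every implication here is reversible, which establishes the ``if and only if'' statement.

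For the displayed chain (\ref{eq2}), I would apply the equivalence just proved with each $i = 1, 2, \ldots, j-1$ in turn; the step from the $i$-th term to the $(i+1)$-st term is precisely $i\binom{n-i}{w} > (i+1)\binom{n-i-1}{w}$. Each such step needs $i \le n-w-1$ so that it lies in the admissible range; since $j \leq n-w$ we have $j-1 \leq n-w-1$, so every index that appears is legitimate. By the equivalence, the $i$-th step holds iff $(i+1)(w+1) > n+1$. Because $i \geq 1$, we have $(i+1)(w+1) \geq 2(w+1) = 2w+2$, and the hypothesis $n \leq 2w$ gives $2w+2 > n+1$. Hence every step of the chain is a strict inequality, which is exactly (\ref{eq2}).

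I do not expect a genuine obstacle: the core computation is a one-line manipulation once the binomial ratio identity is invoked. The only points that require a little care are the boundary conditions — ensuring the binomial coefficients and the denominator $n-i-w$ are positive, which is guaranteed by $i \le n-w-1$ — and verifying that the index range $1 \le i \le j-1$ occurring in the chain stays inside $\{1,\ldots,n-w-1\}$, which follows from $j \le n-w$. Neither is more than a bookkeeping check.
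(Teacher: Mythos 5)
Your proposal is correct and follows essentially the same route as the paper: the paper also reduces $i\binom{n-i}{w} > (i+1)\binom{n-i-1}{w}$ to $\frac{i(n-i)}{n-i-w} > i+1$ (via the factorial expansion rather than the named ratio identity, which is the same computation) and then expands to reach $(i+1)(w+1) > n+1$, deriving the chain from the fact that the condition for $i=1$ is the strongest one. Your explicit bookkeeping of the positivity of $n-i-w$ and of the index range is a minor (and welcome) addition, not a different argument.
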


\begin{proof}
\begin{align*}
i{n-i \choose w} > (i+1){n-i-1 \choose w}
&\Leftrightarrow \frac{i (n-i)!}{(n-i-w)! \cdot w!} > \frac{(i+1)(n-i-1)!}{(n-i-w-1)! \cdot w!} \\
&\Leftrightarrow \frac{i (n-i)}{(n-i-w)} > i+1 \\
&\Leftrightarrow ni - i^2 > ni + n - i^2 - i - iw - w \\
&\Leftrightarrow i + iw + w > n \\
&\Leftrightarrow (i+1)(w+1) > n+1
\end{align*}
Note that Equation \ref{eq2} holds if and only if
$i{n-i \choose w} > (i+1){n-i-1 \choose w}$ holds for
$i=1$, which corresponds to $2(w+1) > n+1$ or equivalently
$n \leq 2w$. 
\end{proof}


We introduce some definitions. Let $\A$ be the representation matrix
of an $\SHF(N;n, 2, \{1,w\})$. A row $r$ of $\A$ is said to be of
{\it type i} if $r$ contains exactly $i$ entries 1. Two rows $r_1$ and
$r_2$ of $\A$ are said to be {\it overlapped} if they share a column
in which both rows have an entry 1. If rows $r_1$ and $r_2$ are not
overlapped, we say that they are {\it disjoint}.

For an arbitrary $\SHF(N;n,2,{1,w})$ $\A$, it is clear that both 0 and 1 have
to occur in each row of $\A$, otherwise that row would not contribute to the
separation of any pair $(C_1, C_2)$. Hence we may assume that $\A$ contains no
row of type 0 in standard form, by simply removing any such row and replacing
them with an arbitrary row of type 1.

The following observation will be used throughout this paper.

\begin{lem}
\label{separate.lem}
Let $\A$ be an $\SHF(N;n,2,{1,w})$. Suppose row $r$ of $\A$ is of type $i \leq n/2$.
If $i < w$, then row $r$ separates exactly $i{n -i \choose w}$ column pairs $(C_1, C_2)$.
If $i \geq w$, then row $r$ separates exactly $i{n -i \choose w}+ {i \choose w}(n-i)$
column pairs $(C_1,C_2)$.
\end{lem}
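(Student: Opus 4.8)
The plan is to count directly, splitting on the symbol that row $r$ assigns to the single column of $C_1$. Write $\{c\} = C_1$, let $U$ be the set of the $i$ columns in which row $r$ has a $1$, and let $Z$ be the set of the $n-i$ columns in which row $r$ has a $0$. By the definition of separation, row $r$ separates a pair $(C_1,C_2)$ with $|C_1|=1$, $|C_2|=w$ and $C_1 \cap C_2 = \emptyset$ exactly when the singleton $\{r(c)\}$ is disjoint from $\{r(x) : x \in C_2\}$; since the alphabet has only two symbols, this is equivalent to saying that every column of $C_2$ carries the symbol opposite to $r(c)$ in row $r$.

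First I would treat the case $c \in U$, i.e.\ $r(c) = 1$. Then separation forces $C_2 \subseteq Z$, so $C_2$ may be any $w$-subset of $Z$; moreover $C_1 \subseteq U$ is then automatically disjoint from $C_2 \subseteq Z$, so no intersection constraint is lost. Hence the number of separated pairs of this kind is $|U| {|Z| \choose w} = i {n-i \choose w}$. Symmetrically, if $c \in Z$, i.e.\ $r(c) = 0$, then separation forces $C_2$ to be a $w$-subset of $U$, which gives $|Z| {|U| \choose w} = (n-i){i \choose w}$ separated pairs. The two families are disjoint, being distinguished by the value $r(c)$, and every pair that row $r$ separates lies in exactly one of them, so the total count is $i {n-i \choose w} + (n-i){i \choose w}$.

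It then remains only to rewrite this according to the size of $i$. If $i \geq w$ the expression is already the claimed $i {n-i \choose w} + {i \choose w}(n-i)$. If $i < w$ then ${i \choose w} = 0$, so the second summand vanishes and the count is $i {n-i \choose w}$, as stated (and this is $0$ when in addition $n-i < w$, which is consistent with the usual convention on binomial coefficients). I do not expect any real difficulty here; the only points worth stating explicitly are that $C_1$ and $C_2$ are forced to lie in complementary symbol classes of row $r$ and are therefore automatically disjoint, and that the split on $r(c)$ partitions the separated pairs, so there is neither over- nor under-counting.
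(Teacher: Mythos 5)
Your proof is correct, and in fact the paper states this lemma as a bare observation with no proof at all, so your direct count is exactly the argument the authors leave implicit. The case split on $r(c)$, the observation that $C_1$ and $C_2$ are forced into complementary symbol classes (making disjointness automatic), and the remark that ${i \choose w}=0$ when $i<w$ account for everything; nothing is missing.
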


We will now prove a bound for binary frameproof codes.

\begin{thm}\label{bound_1}
Let $w$, $N$ be positive integers such that $ w \geq 3$
and $w+1 \leq N \leq 2w+1$. Suppose there exists an
$\SHF(N; n, 2, \{1,w\})$. Then  
$n \leq N.$
\end{thm}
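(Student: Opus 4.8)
The plan is to argue by contradiction with a double-counting argument powered by Lemmas~\ref{binomial} and~\ref{separate.lem}. Suppose an $\SHF(N;n,2,\{1,w\})$ exists with $n\geq N+1$. Since deleting columns only weakens the separating requirement, we may assume $n=N+1$, and hence $w+2\leq n\leq 2w+2$. Passing to standard form (and discarding type-$0$ rows), let $\A$ be the representation matrix, so every row has some type $i$ with $1\leq i\leq n/2$. The number of disjoint column pairs $(C_1,C_2)$ with $|C_1|=1$ and $|C_2|=w$ is exactly $n\binom{n-1}{w}$, and each must be separated by some row, so by Lemma~\ref{separate.lem}
\[
  \sum_{r}\left( i_r\binom{n-i_r}{w}+\binom{i_r}{w}(n-i_r)\right)\;\geq\;n\binom{n-1}{w},
\]
where $i_r$ is the type of row $r$ and we set $\binom{i_r}{w}=0$ when $i_r<w$.

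First I would dispose of the range $w+1\leq N\leq 2w$, where $n\leq 2w+1$ and so $i_r\leq w$ for every row (as $i_r$ is an integer with $i_r\leq n/2$). When $i_r<w$, Lemma~\ref{binomial} gives $i_r\binom{n-i_r}{w}\leq\binom{n-1}{w}$ (for $n=2w+1$ one invokes the ``if and only if'' clause of that lemma, the $i=1$ case being an equality). The only other possibility is $i_r=w$, which can occur only for $n\in\{2w,2w+1\}$; a direct evaluation gives the values $2w$ and $(w+1)^2$ respectively, each of which is at most $\binom{n-1}{w}$ precisely because $w\geq 3$. Hence every row separates at most $\binom{n-1}{w}$ pairs, the displayed inequality forces $N\geq n$, and this contradicts $n=N+1$.

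The hard part is the borderline case $N=2w+1$, i.e.\ $n=2w+2$, where the double-counting inequality becomes tight and therefore cannot by itself exclude anything. Here one checks that over $1\leq i\leq w+1$ the quantity $i\binom{n-i}{w}+\binom{i}{w}(n-i)$ is maximised \emph{only} at $i=2$, with value $2\binom{2w}{w}$ (the cases $i=1$, $3\leq i\leq w-1$, $i=w$, $i=w+1$ being strictly smaller, the last two again using $w\geq 3$), and that $N\cdot 2\binom{2w}{w}=(2w+1)\cdot 2\binom{2w}{w}=(2w+2)\binom{2w+1}{w}=n\binom{n-1}{w}$. Thus the inequality is in fact an equality, which forces \emph{every} row of $\A$ to have type exactly $2$. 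Now I would stop counting pairs and count $1$'s instead: in a type-$2$ row a pair $(\{c\},C_2)$ with $|C_2|=w\geq 3$ can be separated only by having $c$ be one of the two $1$-entries (the alternative would require $w$ $1$-entries in the row), so in order that $(\{c\},C_2)$ be covered for every $w$-subset $C_2$ of the remaining $2w+1$ columns, column $c$ must carry a $1$ in at least $w+1$ rows. Summing over the $n$ columns, $\A$ has at least $n(w+1)=2(w+1)^2$ ones; but every row having type $2$ means the total number of ones equals $2N=2(2w+1)$, and $2(2w+1)<2(w+1)^2$ — the contradiction that finishes the argument.

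I expect the essential obstacle to be exactly this $N=2w+1$ case: the counting bound is attained with equality, so one must exploit the rigidity of the equality case (all rows of the same type) before a crude count of the total number of $1$'s becomes decisive. The only other point requiring care is the routine verification that rows of type $w$ (and, when $n=2w+2$, of type $w+1$) do not separate more than $\binom{n-1}{w}$ pairs; this is precisely where the hypothesis $w\geq 3$ enters.
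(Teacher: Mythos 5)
Your proposal is correct and follows essentially the same route as the paper: a global double count of the $n\binom{n-1}{w}$ column pairs against the per-row maxima supplied by Lemmas~\ref{binomial} and~\ref{separate.lem}, with the observation that for $N\leq 2w$ the count alone suffices, while for $N=2w+1$ it is tight and forces every row to be of type $2$, after which a rigidity argument finishes. The only genuine divergence is the endgame of the tight case: the paper notes that $2w+1$ type-$2$ rows must contain two overlapped rows (pigeonhole on the $4w+2$ ones in $2w+2$ columns) and that such a pair separates $\binom{2w-1}{w}$ common column pairs, so the union falls short of $\T$; you instead count ones per column, showing each column needs at least $w+1$ ones for its pairs to be covered by type-$2$ rows, giving $2(w+1)^2>2(2w+1)$. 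Both finishes are valid; yours avoids invoking the overlap computation (Lemma~\ref{overlapping}, which the paper only formalizes later), at the cost of the small combinatorial argument that the ``partner'' columns of the rows meeting a fixed column cannot all lie in one $w$-subset. Your verifications of where $w\geq 3$ enters (rows of type $w$ and $w+1$) match the paper's.
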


\begin{proof}
Suppose, by contradiction, that there exists an 
$\SHF(N; n, 2, \{1,w\})$
with $n=N+1$. Let $\A$ be its $N\times (N+1)$ 
matrix representation on the symbol set 
$\{0,1\}$. Let $\T$ be the total number of pairs of 
disjoint column sets $(C_1, C_2)$
of $\A$ with $|C_1|=1$ and $|C_2|=w$ that need to be separated.
Then we have $\T:={n \choose w}(n-w)=n{n-1 \choose w}$.

Consider the following three cases regarding the number of columns of $\A$.
\begin{enumerate}
\item[(i)] $n=N+1 \leq 2w$ (i.e. $N \leq 2w-1$).

Using Lemma \ref{binomial} we see that
$${n-1 \choose w} > 2 {n-2 \choose w}> 3{n-3 \choose w}> \cdots >
 (w-1){n -(w-1) \choose w} . $$
The term $j {n-j \choose w}$ in these inequalities corresponds to the 
number of column pairs $(C_1, C_2)$ separated by a row of type $j$.
Hence a row of type 1 separates
the largest number of column pairs $(C_1, C_2)$, namely
${n-1 \choose w}= {N \choose w}$. Moreover, since $\A$ has
$N$ rows, the maximal number of column pairs $(C_1, C_2)$ that
can be separated by all the rows of $\A$ is therefore
$N{N \choose w} =(n-1){n-1 \choose w}$. This is a contradiction,
since $(n-1){n-1 \choose w} < \T.$

\item[(ii)] $n=N+1=2w+1$ (i.e. $N =2w$).

Observe that we have
$${n-1 \choose w} =
{N \choose w}={2w \choose w} = 2{2w-1 \choose w}=2{n-2 \choose w}$$
in this case. This observation together with Lemma \ref{binomial}
give rise to the following inequalities about the number
of column pairs $(C_1, C_2)$ separated by a row of type $j$,
where $j=1, \ldots , w$.
$${n-1 \choose w} = 2 {n-2 \choose w}> 3{n-3 \choose w}> \cdots >
 (w-1){n -(w-1) \choose w} > w{n-w \choose w}+ {n-w}. $$
The last inequality can be easily checked, while all other inequalities
follow from Lemma \ref{binomial} Note that the last term of the
inequalities corresponds to the case of a row of type $w$.
Again, this implies that a row of $\A$ can separate at most
${n-1 \choose w}={N \choose w}$ column pairs $(C_1, C_2)$. 
Thus all $N$ rows of $\A$
can separate at most $N{N \choose w}=2w{2w\choose w}$
column pairs $(C_1, C_2)$, whereas the total number of column pairs 
$(C_1, C_2)$ that need to be separated  is 
$\T={N+1 \choose w}(N+1-w)=(2w+1){2w\choose w}$, a contradiction.

\item[(iii)] $n=N+1=2w+2$ (i.e. $N =2w+1$).

In this case we have the following inequalities
$$2{2w \choose w} > {2w+1 \choose w} >
3{2w-1 \choose w}> \cdots >
(w-1){w+3 \choose w}> w{w+2 \choose w}+(w+2) > 2(w+1)^2. $$
The last two inequalities can be easily checked, while the other
inequalities follow from Lemma \ref{binomial}. Here the first term
of the inequalities corresponds to a row of type 2; the second term
to a row of type 1; the third term to a row of type 3, etc., the last
term corresponds to a row of type $\lfloor {n/2} \rfloor =(w+1)$.

Recall that the total number of column pairs $(C_1, C_2)$
is $\T={n \choose w}(n-w)= {2w+2 \choose w}(w+2)$.
We show that if each row of $\A$ separates a maximal number
of column pairs $(C_1, C_2)$, then all the $N=2w+1$ rows of $\A$ fail
to separate all $\T$ column pairs $(C_1, C_2)$. In fact, this corresponds
to the first term of the above inequalities. This is the case
for which each row of $\A$ is of type 2. So each row
will separate $2{2w \choose w}$ column pairs $(C_1, C_2)$. Hence all
$N=2w+1$ rows of $\A$ will separate at most
 $$\Z:= 2(2w+1){2w \choose w}$$
column pairs $(C_1, C_2)$ of $\A$. 
Now using the equality ${n \choose m} = {\frac{n}{n-m}}{n-1 \choose m}$
we see that 
$$\T={2w+2 \choose w}(w+2) ={\frac{(2w+2)}{(w+2)}}{\frac{(2w+1)}{(w+1)}}(w+2)
   {2w \choose w}=2 (2w+1){2w \choose w} = \Z.$$

However, if each row of $\A$ is of type 2, then there must exist
two overlapped rows, say $r_1$ and $r_2$. 
These rows $r_1$ and $r_2$
will then separate 
${2w-1 \choose w}$ common column pairs $(C_1,C_2)$. 
This leads to a contradiction, since all the rows of $\A$ will separate
less than $\T$ column pairs $(C_1,C_2)$. This completes the proof.  
\end{enumerate}
\end{proof}


Recall that a binary $N\times N$ matrix $\A$ is called
a {\em permutation matrix} of degree $N$ if $\A$ has
precisely one entry equal to 1 in each row and each
column, and 0s elsewhere. 
It is obvious that any permutation matrix
of degree $N$ is the representation matrix of
an $\SHF(N;N, 2 \{1, w\})$  for any $w \leq N-1$. Hence,
the bound of Theorem \ref{bound_1} is tight.
In the following, we prove a stronger result
which states that permutation matrices are
the only solutions for an  $\SHF(N;N, 2, \{1, w\})$ 
with  $w+1 \leq N \leq 2w+1$ and $ w \geq 3$.

\begin{thm}\label{tight_1}
Let $w$, $N$ be positive integers such that $ w \geq 3$
and $w+1 \leq N \leq 2w+1$. Suppose there exists an
$\SHF(N; n, 2, \{1,w\})$ with $n=N$. Then its representation
matrix in standard form is a permutation matrix of degree $N$.
\end{thm}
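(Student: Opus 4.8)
The plan is to upgrade the counting argument behind Theorem~\ref{bound_1} to a structural one. Put $n=N$, so $\A$ is an $N\times N$ binary matrix in standard form, and the number of pairs $(C_1,C_2)$ to be separated is $\T=\binom{N}{w}(N-w)=N\binom{N-1}{w}$. Let $M$ be the maximum, over all admissible row types $i$ (that is, $1\le i\le\lfloor n/2\rfloor$, since standard form forbids type $0$ and type $>n/2$), of the number of pairs that a single row of type $i$ separates, as computed in Lemma~\ref{separate.lem}. The first step is the double-counting observation: if $NM=\T$, then every row of $\A$ separates exactly $M$ pairs and every one of the $\T$ pairs is separated by exactly one row, since $\T\le\sum_{r}(\text{pairs separated by }r)\le NM=\T$ forces equality throughout.

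The second step is to evaluate $M$ and pin down the maximizing row types in each subinterval, reusing Lemma~\ref{binomial} exactly as in the proof of Theorem~\ref{bound_1}. For $w+1\le N\le 2w$ the inequalities of Lemma~\ref{binomial} are strict, so $M=\binom{N-1}{w}$ is attained \emph{only} by rows of type $1$ (a one-line check handles a row of type $w$ when $N=2w$, using $w\ge 3$), and $NM=N\binom{N-1}{w}=\T$. For $N=2w+1$ the identity $\binom{2w}{w}=2\binom{2w-1}{w}$ produces a tie, so $M=\binom{2w}{w}$ is attained by rows of type $1$ and type $2$ only (again a short check rules out type $w$, namely $(w+1)^2<\binom{2w}{w}$ for $w\ge 3$), and $NM=\T$ by the same identity $\binom{n}{m}=\tfrac{n}{n-m}\binom{n-1}{m}$ used there. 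Hence in every case the conclusion of the first step is available.

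For $w+1\le N\le 2w$ the proof then closes quickly. Every row is of type $1$; let $c_r$ be the column carrying the unique $1$ of row $r$. A type-$1$ row with its $1$ in column $d$ separates exactly the pairs $(\{d\},C_2)$ with $d\notin C_2$, and in particular it is the only kind of row that can separate a pair whose first component is $\{d\}$. Since the $\binom{N-1}{w}\ge 1$ such pairs must all be separated, every column $d$ equals $c_r$ for some $r$; and since no pair is separated twice, distinct rows have distinct $1$-columns. With $N$ rows and $N$ columns, $r\mapsto c_r$ is then a bijection, i.e.\ $\A$ is a permutation matrix.

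The main work is the case $N=2w+1$, where rows of type $2$ must still be excluded. Fix a column $c$; there are $\binom{2w}{w}$ pairs with first component $\{c\}$, each to be separated exactly once. A type-$1$ row with its $1$ in $c$ separates all of them; a type-$2$ row with $1$'s in $\{c,e\}$ separates precisely those $(\{c\},C_2)$ with $e\notin C_2$, i.e.\ $\binom{2w-1}{w}$ of them; no other row separates any pair with first component $\{c\}$. If some type-$1$ row has its $1$ in $c$, then, to avoid a pair being separated twice, no other row has a $1$ in $c$. Otherwise the $\binom{2w}{w}$ pairs with first component $\{c\}$ must be covered by type-$2$ rows having a $1$ in $c$, say with second $1$-columns $e_1,\dots,e_k$ listed with multiplicity; setting $D=([n]\setminus\{c\})\setminus C_2$, this demands that \emph{every} $w$-subset $D$ of the $2w$ columns different from $c$ contain exactly one of $e_1,\dots,e_k$. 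This is impossible for $w\ge 2$: if $\{e_1,\dots,e_k\}$ has at most $w$ elements one can choose such a $D$ disjoint from it (containing none), and if it has at least $2$ elements one can choose such a $D$ containing two of them (containing at least two). Consequently every column carries a (necessarily unique) type-$1$ row, and, exactly as in the previous paragraph, all $2w+1$ rows are of type $1$ with distinct $1$-columns, so $\A$ is a permutation matrix. I expect this last case --- ruling out type-$2$ rows via the ``exactly one $e_i$ in every $w$-subset'' impossibility --- to be the real obstacle; everything else is a repackaging of the estimates already established for Theorem~\ref{bound_1}.
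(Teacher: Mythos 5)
Your proposal is correct and follows essentially the same route as the paper: both compare the total count $\T = N\binom{N-1}{w}$ of pairs with $N$ times the per-row maximum obtained from Lemmas \ref{separate.lem} and \ref{binomial}, conclude that every pair must be separated exactly once by some row, and then read off the permutation structure. The only divergence is in the endgame of the case $N=2w+1$, where the paper excludes overlapping rows by appealing to Lemma \ref{overlapping} (exact coverage forces all rows to be pairwise column-disjoint), while you fix a column $c$ and show directly that the $\binom{2w}{w}$ pairs with first component $\{c\}$ cannot be exactly covered by type-$2$ rows; both arguments are valid and of comparable length.
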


\begin{proof}
Let $\A$ be the representation matrix of an $\SHF(N; N, 2, \{1,w\})$
in standard form with $w+1 \leq N \leq 2w+1$ and
$w \geq 3$. Consider two cases.

\begin{enumerate}
\item[(i)] $n=N\leq 2w$.

Recall that the total number of column pairs
$(C_1, C_2)$ of $\A$ that need to be separated is 
$\T={N \choose w}(N-w)$. By Lemma \ref{binomial}
each row of $\A$ can separate at most ${N-1 \choose w}$
column pairs $(C_1, C_2)$, and this case occurs when 
each row is of type 1. Thus the largest number
of separated column pairs $(C_1, C_2)$ obtained by $N$ rows of $\A$ is
$N{N-1 \choose w}={N \choose w}(N-w)$. This number is achieved 
if and only if the unique entries 1 of the rows belong to the different
columns, i.e., $\A$ is a permutation matrix of degree $N$.

\item[(ii)] $n=N=2w+1$.

In this case we have $\T={2w+1 \choose w}(w+1)$.
A row $r$ of $\A$ can separate at most $2w \choose w$ column pairs
$(C_1,C_2)$. This number corresponds to $r$ being
of either type 1 or type 2. Further, 
the maximum number of possible separated column pairs $(C_1, C_2)$
which may be achieved by all the rows of $\A$ is
$(2w+1){2w \choose w}$. 
To achieve the
maximum number $(2w+1){2w \choose w}$ of separated column pairs,
any two rows of $\A$ have to separate disjoint sets of 
column pairs $(C_1, C_2)$. This implies that any two rows of $\A$
are disjoint. 
This is equivalent to saying that
each column of $\A$ contains exactly one entry 1,
otherwise if two rows $r_1$ and $r_2$ are overlapped, then these two rows separate
a common non-empty subset of column pairs $(C_1, C_2)$, which is
a contradiction. Therefore, $\A$ is a permutation matrix of
degree $2w+1$.    
\end{enumerate}
\end{proof}

\section{ Bounds for binary $\FPC$ with $w \geq 4$ and $2w+2 \leq N \leq 3w$ }
\label{second-interval.sec}

In this section, we present a result that allows characterization of $\SHF(N;N,2,{1,w})$
for $w \geq 4$ and $N \leq 3w$. In particular, we prove that all such separating hash
families in standard form are permutation matrices. This type of result allows us to prove
bounds similar to Theorem \ref{bound_1} by using the following theorem.

\begin{thm} \label{perm_bound}
Let $w \geq 3$, $N \geq w+1$ and suppose that all $\SHF(N;N,2,\{1,w\})$ in standard form
are permutation matrices. If $\SHF(N;n,2,\{1,w\})$ exists, then $n \leq N$.
\end{thm}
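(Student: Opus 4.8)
The plan is to argue by contradiction, following the same counting philosophy as in Theorem \ref{bound_1} but now leveraging the hypothesis that every $\SHF(N;N,2,\{1,w\})$ in standard form is a permutation matrix. Suppose an $\SHF(N;n,2,\{1,w\})$ exists with $n \geq N+1$; by deleting columns we may assume $n = N+1$. Put it in standard form and let $\A$ be its $N \times (N+1)$ representation matrix, with no row of type $0$. The idea is to pass to a sub-configuration on $N$ columns and invoke the permutation-matrix hypothesis there, then derive a contradiction by exhibiting a pair $(C_1,C_2)$ that no row separates.

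First I would observe that deleting any single column $c$ from $\A$ yields the representation matrix $\A_c$ of an $\SHF(N;N,2,\{1,w\})$ (deleting a column only removes constraints, never destroys separation of the remaining pairs), possibly after re-standardizing. By hypothesis, after putting $\A_c$ in standard form it is a permutation matrix of degree $N$. The key structural consequence I want to extract: since each $\A_c$ is (up to row-complementation) a permutation matrix, every row of $\A$ restricted to the $N$ columns other than $c$ has exactly one $1$ or exactly one $0$. Running this over all choices of $c$, I would conclude that every row of $\A$ has weight in $\{1, 2, N-1, N\}$ — and then the standard-form normalization (at most $\lfloor n/2\rfloor$ ones per row, together with $n = N+1 \geq w+2 \geq 5$) rules out the heavy rows, forcing every row of $\A$ to be of type $1$ or type $2$. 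At that point the counting argument from case (iii) of Theorem \ref{bound_1} applies almost verbatim: a type-$1$ row separates ${n-1 \choose w}$ pairs, a type-$2$ row separates $2{n-2 \choose w}$ pairs, and by Lemma \ref{binomial} (since we must check the arithmetic relating $n$ to $w$ — note $n = N+1 \leq 3w+1$, not necessarily $\leq 2w$, so the inequality chain needs the general form $i{n-i \choose w} > (i+1){n-i-1\choose w} \iff (i+1)(w+1) > n+1$) the per-row maximum is bounded, and summing over $N$ rows falls short of $\T = n{n-1 \choose w}$, or else equality forces all rows disjoint of type $1$, i.e. a permutation matrix on $N$ columns, which cannot have $N+1$ columns.

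The main obstacle I anticipate is the middle step: extracting a clean structural restriction on the row weights of $\A$ from the family of facts ``$\A_c$ is a permutation matrix for each $c$,'' and in particular handling the re-standardization carefully, since complementing a row of $\A_c$ to put it in standard form may not be consistent across different deletions $c$. I would handle this by arguing directly about the weight $\omega(r)$ of each row $r$ of $\A$ (a deletion-invariant notion up to $\pm 1$): deleting a column where $r$ has a $1$ decreases $\omega(r)$ by $1$, deleting one where $r$ has a $0$ leaves it fixed, and the resulting row in $\A_c$ must have weight $1$ or $N-1$; combining the two types of deletion pins $\omega(r)$ down to a two-element set near $\{1,2\}$ or near $\{N-1,N\}$, and standard form excludes the latter. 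A secondary obstacle is confirming the needed binomial inequalities hold on the full range $2w+2 \leq N \leq 3w$, i.e. for $n$ as large as $3w+1$; here I would invoke Lemma \ref{binomial} with the explicit criterion $(i+1)(w+1) > n+1$, which holds for $i \geq 2$ whenever $n \leq 3w+2$, so the type-$2$ rows are indeed the worst case and the sum $N \cdot 2{n-2\choose w}$ versus $\T = n{n-1\choose w} = n \cdot \tfrac{n-1}{n-1-w}{n-2\choose w}$ comparison reduces to checking $2N(n-1-w) \leq n(n-1)$ fails, i.e. $2N < n$ — wait, this needs rechecking; more carefully, $\T/{n-2\choose w} = n(n-1)/(n-1-w)$ and the row bound gives $2N$, and since $n=N+1$ one compares $2N$ with $(N+1)N/(N-w)$, which for $N \leq 3w$, $w\geq 4$ I expect to force the strict inequality (or exact equality only in a boundary case handled by the overlap argument of case (iii)). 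Once the type restriction and the Lemma \ref{binomial} bookkeeping are in place, the contradiction closes exactly as before.
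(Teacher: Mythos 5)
Your structural first half is essentially sound and close in spirit to the paper's: deleting columns and invoking the permutation-matrix hypothesis does force every row of $\A$ to have weight $1$ or $2$ (the paper gets this more economically by deleting just the last column, and your weight-based handling of the re-standardization issue is a valid way around the consistency worry). The genuine gap is in your endgame. You propose to finish by counting separated column pairs, but that count does not close in the range where the theorem is actually needed. With $n=N+1$ and all rows of type $1$ or $2$, the best per-row contribution for $n\geq 2w+1$ is $2\binom{n-2}{w}$, so the $N$ rows separate at most $2N\binom{n-2}{w}$ pairs, while $\T=n\binom{n-1}{w}=\frac{n(n-1)}{n-1-w}\binom{n-2}{w}$. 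The required inequality $2N<\frac{(N+1)N}{N-w}$ is equivalent to $N<2w+1$: it holds only for $N\leq 2w$, gives equality at $N=2w+1$ (which is why case (iii) of Theorem \ref{bound_1} needed the extra overlap argument), and \emph{fails outright} for $N\geq 2w+2$. Since Theorem \ref{perm_bound} is invoked precisely for $2w+2\leq N\leq 3w$ (Theorems \ref{bound_2}, \ref{bound_3}, \ref{bound_4}), your ``boundary case handled by the overlap argument'' is not a boundary case at all — it is the whole range, and patching it by overlap bookkeeping would amount to redoing the hard case analysis of Sections 3 and 4, defeating the purpose of the theorem.

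The missing idea is that no counting is needed: apply the permutation-matrix hypothesis to \emph{two overlapping} sets of $N$ columns. Once the first $N$ columns form a permutation matrix (so every row of $\A$ has at most two $1$s), the last $N$ columns are automatically in standard form and hence also form a permutation matrix. Comparing the two forces columns $1$ and $N+1$ of $\A$ to be identical, and two identical columns can never be separated as $(C_1=\{1\}, C_2\ni N+1)$ — an immediate contradiction with no arithmetic at all.
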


\begin{proof}
Suppose not, then there exists some $\SHF(N;N+1,2,\{1,w\})$, say $\A$. Let $\B$ be the
submatrix formed by the first $N$ columns of $\A$. We may assume w.l.o.g. that $\B$
is in standard form (we may need to permute 0s and 1s in each row of $\A$ to achieve this).
Then $\B$ is a permutation matrix. Thus each row of $\A$ has at most two entries of 1.

Since $N \geq w+1 \geq 4$, we have that $N/2 \geq 2$. Let $\C$ be the submatrix formed by
the last $N$ columns of $\A$. Each row of $\C$ has at most two entries of 1 as well, so
$\C$ is in standard form, and hence it is a permutation matrix. This implies the first and
last columns of $\A$ are identical, which is a contradiction since $(\{1\},\{N+1\})$ cannot
be separated. 
\end{proof}

We may use Theorem \ref{perm_bound} to give a second proof of Theorem \ref{bound_1} using
Theorem \ref{tight_1}. In light of this result, it is also important to consider the question
``when are permutation matrices the only representatives of $\SHF(N;N,2,{1,w})$ in standard form?''
We give an affirmative answer for $w \geq 4$ and $N \leq 3w$ through a series of lemmas below.

\begin{lem} \label{type_1}
Let $w \geq 3$ and let $\A$ be the representation matrix of an $\SHF(N;N,2,\{1,w\})$.
Suppose that all $\SHF(N-1;N-1,2,\{1,w\})$ in standard form are permutation matrices.
If $\A$ contains a row of type 1, then $\A$ is a permutation matrix.
\end{lem}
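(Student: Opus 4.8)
The plan is to delete a type-$1$ row together with the column carrying its single $1$, recognize the remaining $(N-1)\times(N-1)$ matrix as a permutation matrix by means of the hypothesis, and then use the separating property of $\A$ to rule out the only way the deleted column could keep $\A$ from being a permutation matrix.

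We may assume $\A$ is in standard form, and that $N\geq w+2$ (hence $N\geq 5$, as $w\geq 3$): if $N-1\leq w$ then every $(N-1)\times(N-1)$ binary matrix is trivially an $\SHF(N-1;N-1,2,\{1,w\})$ and the hypothesis cannot hold, so the hypothesis already forces $N-1\geq w+1$. Fix a type-$1$ row $r$, with its unique $1$ in column $c$, and let $\A'$ be the matrix obtained from $\A$ by deleting row $r$ and column $c$. Then $\A'$ is an $\SHF(N-1;N-1,2,\{1,w\})$: any disjoint column sets $C_1,C_2$ of $\A'$ with $|C_1|=1$, $|C_2|=w$ are disjoint column sets of $\A$ avoiding column $c$, so some row of $\A$ separates them, and that row is not $r$ (row $r$ is $0$ on every column $\neq c$, hence constant on $C_1\cup C_2$ while both parts are nonempty), so it persists in $\A'$ and still separates. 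Now put $\A'$ into standard form $\A''$ by complementing any row of weight more than $(N-1)/2$ (this preserves the separating property); by hypothesis $\A''$ is a permutation matrix, so each of its rows has weight $1$. Since $\A$ is in standard form, each row of $\A'$ has weight at most $N/2$, so it either already has weight $\leq(N-1)/2$ (and then equals its $\A''$-row, of weight $1$), or has weight exactly $N/2$ with $N$ even, in which case its complement has weight $N/2-1=1$, forcing $N=4$ — impossible. Hence no complementation took place, $\A'=\A''$ is a permutation matrix, and in particular the $1$'s of $\A'$ lie in distinct columns. (We also note $\A'$ has no row of type $0$: such a row would be a duplicate of $r$ in $\A$, and deleting it would produce an $\SHF(N-1;N,2,\{1,w\})$, contradicting Theorem~\ref{perm_bound} applied with the parameter $N-1$ under our hypothesis; alternatively one assumes a hash family consists of distinct functions.)

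So there is a bijection $\pi$ from the rows of $\A$ other than $r$ onto the columns other than $c$, with the unique $1$ of row $j$ among the columns $\neq c$ sitting in column $\pi(j)$; in addition row $j$ may or may not have a $1$ in column $c$. Let $S=\{\,j\neq r:\A(j,c)=1\,\}$; it is enough to show $S=\emptyset$, since then every row of $\A$ has exactly one $1$ and these lie in distinct columns, so $\A$ is a permutation matrix. Assume $i_0\in S$ and put $d=\pi(i_0)\neq c$. Choose a set $C_2'$ of $w-1$ columns avoiding $c$ and $d$ (possible because $N-2\geq w-1$), and consider $C_1=\{d\}$, $C_2=\{c\}\cup C_2'$. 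No row separates $(C_1,C_2)$: row $r$ assigns $0$ to $d$ and to every column of the nonempty set $C_2'$; a row $j\notin S\cup\{r\}$ assigns $0$ to $c$ and, since $j\neq i_0=\pi^{-1}(d)$, also $0$ to $d$; and any row $j\in S$ assigns $1$ to $c$, so to separate it would have to assign $0$ to $d$ and $1$ to every column of $C_2'$, which is impossible since the only $1$'s of row $j$ are in columns $c$ and $\pi(j)$ while $|C_2'|=w-1\geq 2$. This contradicts the separating property of $\A$, so $S=\emptyset$ and $\A$ is a permutation matrix.

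The crux is the reduction step: the hypothesis only controls those $\SHF(N-1;N-1,2,\{1,w\})$ that are \emph{in standard form}, so one must make sure that standardizing $\A'$ requires no row complementation — otherwise $\A'$, and hence $\A$, would not be forced to have weight-$1$ rows; this is where $N\geq 5$ and the standard-form hypothesis on $\A$ are used. After that, the final contradiction is short, the one idea being to test separation on a pair $(\{d\},C_2)$ whose $w$-element set contains the special column $c$, since those are exactly the pairs that row $r$ cannot help separate.
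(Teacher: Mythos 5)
Your proof is correct and follows essentially the same route as the paper's: delete the type-$1$ row and its column, invoke the hypothesis to see the remaining $(N-1)\times(N-1)$ matrix is a permutation matrix, then test separation on pairs $(\{d\},C_2)$ with the deleted column inside $C_2$ to force every other row to be $0$ there. The only difference is that you carefully justify the "w.l.o.g.\ in standard form" step (no row complementation is needed) and run the endgame as a contradiction rather than the paper's direct uniqueness argument; these are refinements of detail, not a different method.
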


\begin{proof}
We can write $\A$ in the form
\begin{displaymath}
\A =
\left( \begin{array}{c|ccc}
    1  & 0 & \ldots & 0 \\ \hline
            &    &         &   \\
             &    & \B      &   \\
             &    &         &   \\    
    \end{array} \right)
\end{displaymath}
  
Let $\B$ be the $(N-1) \times (N-1)$ matrix obtained from $\A$
by removing the first row and  the first column of $\A$. 
Then $\B$ is the representation matrix of an $\SHF(N-1; N-1, 2,\{1,w\})$.
We may assume w.l.o.g. that $\B$ is in standard form, and hence it is a
permutation matrix.

By permuting the columns of $\A$, if necessary, we may assume that $\B$
is the identity matrix. Consider column pairs $(C_{x}=\{x\}, C_{1,y,z}=\{1,y,z\})$
with $x,y,z =\{2, \ldots, N\}$ and $x \not= y \not=z \not=x$.
Since $\B$ is the identity matrix,
a row that separates $(C_{x}, C_{1,y,z})$ must have entry 0 in columns
$1,y,z$ and entry 1 in column $x$. Thus row $x$ is the unique row
separating $(C_{x}, C_{1,y,z})$. It follows that $\A$ is a permutation
matrix. 
\end{proof}

\begin{lem} \label{ext_shf}
Let $w \geq 4$, $N \leq 3w$, and let $\A$ be the representation matrix of
an $\SHF(N;N,2,\{1,w\})$. Suppose the first row of $\A$ is of type $i_0 \leq w$ with
$\A(1,1) = 1$. Let $\B$ be the submatrix by deleting the first row and first column of
$\A$. Then $\B$ is an $\SHF(N-1;N-1,2,\{1,w\})$.
\end{lem}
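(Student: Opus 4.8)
The plan is to argue by contradiction, turning a failure of $\B$ into a shortage of rows in $\A$. Suppose $\B$ is not an $\SHF(N-1;N-1,2,\{1,w\})$. Then some pair of disjoint column sets $(\{x\},C_2)$ of $\B$, with $x,C_2\subseteq\{2,\dots,N\}$ and $|C_2|=w$, is separated by no row of $\B$; since $\A$ is an $\SHF$, the first row of $\A$ is the \emph{unique} row of $\A$ separating $(\{x\},C_2)$. The first step is to pin down how row $1$ separates it: if it did so by sending $x$ to $0$ and every column of $C_2$ to $1$, then together with $\A(1,1)=1$ row $1$ would carry at least $w+1$ ones, contradicting its type $i_0\le w$. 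Hence $\A(1,x)=1$ and $\A(1,c)=0$ for all $c\in C_2$; in particular columns $1$ and $x$ both carry a $1$ in row $1$, so column $1$ can be used as a ``spare copy'' of column $x$ as far as row $1$ is concerned.

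For a row $r\ne1$ I would record its match set $M_r=\{c\in C_2:\A(r,c)=\A(r,x)\}$; because no row $r\ge2$ separates $(\{x\},C_2)$, every such $M_r$ is nonempty. Now for each $c\in C_2$ the pair $(\{x\},(C_2\setminus\{c\})\cup\{1\})$ is not separated by row $1$ (columns $x$ and $1$ agree there), hence is separated by some row $r\ge2$; such a row sends $\{x\}$ and $C_2\setminus\{c\}$ to opposite values, so $M_r\subseteq\{c\}$, and nonemptiness forces $M_r=\{c\}$. Symmetrically, $(\{c\},(C_2\setminus\{c\})\cup\{x\})$ is separated only by rows $r\ge2$ with $M_r=C_2\setminus\{c\}$. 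Since $w\ge3$, the $w$ singletons and the $w$ co-singletons of $C_2$ are $2w$ distinct subsets, so $\A$ has at least $2w$ rows besides row $1$, i.e.\ at least $2w+1$ rows in all.

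To push the count beyond $N$ I would introduce the ``double-swap'' pairs $(\{c\},(C_2\setminus\{c,c'\})\cup\{x,1\})$ for distinct $c,c'\in C_2$. Each is again unseparated by row $1$, and any row separating it has match set $C_2\setminus\{c\}$ or the co-doubleton $C_2\setminus\{c,c'\}$; whether the already-found row $v_c$ with $M_{v_c}=C_2\setminus\{c\}$ also separates such a pair is decided solely by the entry of $v_c$ in column $1$. A case analysis on these entries (possibly also applying the spare-column argument with the roles of $1$ and $x$ interchanged, i.e.\ to the pair $(\{1\},C_2)$) shows that in every case $\A$ is forced to contain at least $\binom{w}{2}$ further distinct rows, carrying either genuinely new co-doubleton match sets or being second rows with old co-singleton match sets. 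Then $\A$ has at least $2w+1+\binom{w}{2}$ rows, and since $w\ge4$ this number equals $2w+1+\tfrac{w(w-1)}{2}>3w\ge N$, a contradiction. Therefore $\B$ is an $\SHF(N-1;N-1,2,\{1,w\})$.

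The step I expect to be the main obstacle is this last one: choosing the additional family of pairs and verifying that each one genuinely forces a row not already in hand, rather than being separated ``for free'' by one of the $2w$ rows already produced. This is exactly where both hypotheses are used — $w\ge4$ so that co-doubletons are distinct from singletons and co-singletons (and so that $\binom{w}{2}$ is large enough), and $N\le3w$ so that the resulting count $2w+1+\binom{w}{2}$ overshoots. (Recasting the whole argument through Lemma~\ref{separate.lem}, by bounding the number of pairs each row can separate, meets the same difficulty of certifying that the bad pair wastes enough separating capacity.)
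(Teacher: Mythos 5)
Your first two batches of forced rows are exactly the paper's argument: you correctly show that only row $1$ of $\A$ can separate $(\{x\},C_2)$, deduce from $i_0\le w$ and $\A(1,1)=1$ that $\A(1,x)=1$ and $\A(1,c)=0$ for $c\in C_2$, and then use the $w$ pairs $(\{x\},(C_2\setminus\{c\})\cup\{1\})$ and the $w$ pairs $(\{c\},(C_2\setminus\{c\})\cup\{x\})$ to force $2w$ rows with pairwise distinct match sets (the singletons and co-singletons of $C_2$), hence $2w+1$ rows in all. These are precisely the sets $R_1$ and $R_2$ in the paper's proof.

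The gap is in your third batch, and it is fatal as stated. A row separating your double-swap pair $(\{c\},(C_2\setminus\{c,c'\})\cup\{x,1\})$ must give columns $1$ and $x$ the \emph{same} entry and have match set $C_2\setminus\{c\}$ or $C_2\setminus\{c,c'\}$. But the already-found row $v_c$ with match set $C_2\setminus\{c\}$ is only constrained on columns $\{x\}\cup C_2$; if its column-$1$ entry happens to agree with its column-$x$ entry, then $v_c$ itself separates \emph{every} double-swap pair with first element $c$, and no new row is forced. In the worst case (all $v_c$ have this property) your entire third family is separated for free and you stay at $2w+1$ rows, far from the claimed $2w+1+\binom{w}{2}$; even when some $v_c$ fail, a single new row per such $c$ suffices, so $\binom{w}{2}$ is never forced. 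The paper's proof handles exactly this dichotomy: for each $i$ it branches on the column-$1$ entry $a$ of $r'_i$, using the pair $(\{i+2\},\{1,\dots,w+1\}\setminus\{i+2\})$ when $a=1$ (putting columns $1$ and $x$ together on the large side so that no row of $R_1$ can separate), and when $a=0$ using size-$4$ pairs $(\{i+2\},\{2,3,\dots,i+1,w+j+2\})$ that reach into columns outside $\{1,x\}\cup C_2$, where the key extra ingredient is the standing reduction (via Lemma~\ref{type_1}) to the case that $\A$ has \emph{no row of type $1$} --- otherwise $r'_i$ could separate all of these pairs itself. Your proposal never invokes that reduction, and without it (or some substitute) I do not see how to complete the count; the paper also only needs $w$ further rows to reach $3w+1>N$, not $\binom{w}{2}$.
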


\begin{proof}
If $\A$ contains a row of type 1 then Lemma \ref{type_1} applies. For the remainder
of this proof, we assume that $\A$ contains no row of type 1.

Suppose $\B$ is not an $\SHF(N-1;N-1,2,\{1,w\})$, then there exists some column set
pair $(C_1 = \{x\},C_2)$ with $|C_2| = w$ that cannot be separated by $\B$. If $x$
corresponds to a column of $\A$ that has an entry of 0 in the first row then $C_2$
contains a column of $\A$ that has an entry of 0 in the first row since $i_0 - 1 < w$.
But then $\A$ also cannot separate $(C_1,C_2)$; a contradiction. Thus $x$ contains a
1 in the first row, and all columns of $C_2$ correspond to columns of $\A$ with 0's in
the first row (otherwise $\A$ still cannot separate $(C_1,C_2)$).

Permute the columns of $\A$ so that $x$ corresponds to column 2 and columns in $C_2$
correspond to columns $3,\ldots,w+2$. The matrix $\A$ is now

\begin{displaymath}
\A =
\left( \begin{array}{ccccc|c}
     1  & 1 & 0 & \cdots & 0  & \\ \hline
        &     &     &     &      & \\
        &     &     &     &      & \\  
    \end{array} \right)
\end{displaymath}

For $1 \leq i \leq w$, let $C_i = \{3,\ldots,w+2\} \setminus \{i+2\}$. The column
set pair $(\{2\}, C_i \cup \{1\})$ must be separated by $\A$. By permuting 0's and 1's
if necessary, there is some row $r_i \neq 1$ with entry 1 in column 2 and entry 0 in
columns of $C_i$. Since $C_i \cup C_j = \{3,\ldots,w+2\}$ for $i \neq j$ and $\B$ does
not separate $(\{2\}, \{3,\ldots,w+2\})$, we have that $r_i \neq r_j$ for $i \neq j$.
Moreover, entry $i$ of $r_i$ must also be a 1. Let $R_1 = \{r_1,\ldots,r_w\}$, and by
permuting the rows of $\A$ we have

\begin{displaymath}
\A =
\left( \begin{array}{ccccccc|c}
     1 & 1 & 0 & 0 & 0 & \cdots & 0  & \\ \hline
     0 & 1 & 1 & 0 & 0 & \cdots & 0  & \\
     0 & 1 & 0 & 1 & 0 & \cdots & 0  & \\  
     \vdots & \vdots &  &  &  & \ddots & \vdots  & \\  
     0 & 1 & 0 & 0 & 0 & \cdots & 1  & \\
     * & * & * & * & * & \cdots & *  & \\
     \vdots & \vdots &  &  &  & \ddots & \vdots  & \\
     * & * & * & * & * & \cdots & *  & \\  
    \end{array} \right)
\end{displaymath}

Next, consider $C'_i = \{2,\ldots,w+2\} \setminus \{i+2\}$ for $i = 1,\ldots,w$.
The column set pair $(\{i+2\}, C'_i)$ must be separated by $\A$ with some row
$r'_i \neq 1$ and $r'_i \notin R_1$. By permuting the 0's and 1's if necessary,
$r'_i$ has entry 1 in column $(i+2)$ and entry 0 in columns in $C'_i$. Moreover,
$r'_i \neq r'_j$ for $i \neq j$. Now let $R_2 = \{r'_1,\ldots,r'_w\}$, and by
permuting the rows of $\A$ we have

\begin{displaymath}
\A =
\left( \begin{array}{ccccccc|c}
     1 & 1 & 0 & 0 & 0 & \cdots & 0  & \\ \hline
     0 & 1 & 1 & 0 & 0 & \cdots & 0  & \\
     0 & 1 & 0 & 1 & 0 & \cdots & 0  & \\  
     \vdots & \vdots &  &  &  & \ddots & \vdots  & \\  
     0 & 1 & 0 & 0 & 0 & \cdots & 1  & \\ \hline
     * & 0 & 1 & 0 & 0 & \cdots & 0  & \\
     * & 0 & 0 & 1 & 0 & \cdots & 0  & \\  
     \vdots & \vdots &  &  &  & \ddots & \vdots  & \\  
     * & 0 & 0 & 0 & 0 & \cdots & 1  & \\ \hline
      * & * & * & * & * & \cdots & *  & \\
     \vdots & \vdots &  &  &  & \ddots & \vdots  & \\
     * & * & * & * & * & \cdots & *  & \\  
    \end{array} \right)
\end{displaymath}

We now do the following addition of rows in steps, starting with $R_3 = \emptyset$:

\begin{description}
\item[Step 1]\mbox{\quad}\\
Let $a$ be the column 1 entry of $r'_1$. If $a = 1$, consider the column 
pair $(\{3\}, \{1,\ldots,w+1\} \setminus \{3\})$, which must be separated by
some row $r''_1 \neq 1$ of $\A$. Note that $r''_1 \notin R_1$ and $r''_1 \notin R_2$.
Add $r''_1$ to $R_3$.

If $a = 0$, consider the column  pairs $(\{3\}, C''_{1,j} = \{2,4,5,w+j+2\})$ for
$j = 1,\ldots,N-w-2$. Since $w \geq 4$, we have that $\A$ separates $(\{3\}, C''_{1,j})$.
If $r'_1$ separates every such pair then $r'_1$ is a type 1 row;
a contradiction to $\A$ having no type 1 rows. Thus there is some $j$ such that another
row of $\A$, call it again $r''_1$, that separates $(\{3\}, C''_{1,j})$. Note that
$r''_1 \neq 1$, $r''_1 \notin R_1$ and $r''_2 \notin R_2$. Add $r''_1$ to $R_3$.

\item[Step 2]\mbox{\quad}\\
Let $a$ be the column 1 entry of $r'_2$. If $a = 1$, consider the column  pair
$(\{4\}, \{1,\ldots,w+1\} \setminus \{4\})$, which must be separated by some row
$r''_2 \neq 1$ of $\A$. Note that $r''_2 \notin R_1 \cup R_2$ and $r''_2 \neq r''_1$.
Add $r''_2$ to $R_3$.

If $a = 0$, consider the column  pairs $(\{4\}, C''_{2,j} = \{2,3,5,w+j+2\})$ for
$j = 1,\ldots,N-w-2$. Similar to Step 1, there exists some $j$ for which another row
of $\A$, call it again $r''_2$, that separates $(\{4\}, C''_{2,j})$. Again
$r''_2 \notin R_1 \cup R_2$ and $r''_2 \neq r''_1$. Add $r''_2$ to $R_3$.

\item[Steps $i = 3,\ldots,w-1$]\mbox{\quad}\\
Let $a$ be the column 1 entry of $r'_i$. If $a = 1$, consider the column  pair
$(\{i+2\}, \{1,\ldots,w+1\} \setminus \{i+2\})$, which must be separated by some
row $r''_i \neq 1$ of $\A$. Note that $r''_i \notin R_1 \cup R_2 \cup R_3$. Add
$r''_i$ to $R_3$.

If $a = 0$, consider the column  pairs $(\{i+2\}, C''_{i,j} = \{2,3,\ldots,i+1,w+j+2\})$
for $j = 1,\ldots,N-w-2$. Since $|C''_{i,j}| = i+1 \leq w$, some row of $\A$ separates
$(\{i+2\}, C''_{i,j})$. Similar to Step 1, there exists some $j$ for which another row of
$\A$, call it again $r''_i$, that separates $(\{i+2\}, C''_{i,j})$. Again
$r''_i \notin R_1 \cup R_2 \cup R_3$. Add $r''_i$ to $R_3$.

\item[Step $w$]\mbox{\quad}\\
Consider the column set pair $(\{1\},\{2,\ldots,w+1\})$, which must be separated
by some row $r$ of $\A$. Clearly $r \notin R_1 \cup R_2 \cup R_3$. Add $r$ to $R_3$.
\end{description}

At the end of Step $w$, we have added $w$ distinct rows to $R_3$, so $\A$ has at
least $|R_1 \cup R_2 \cup R_3| + 1 = w + w + w + 1 = 3w+1$ rows. This contradicts
$N \leq 3w$, so Lemma \ref{ext_shf} holds. 
\end{proof}

\begin{lem} \label{extension}
Let $w \geq 4$, $w+1 \leq N \leq 3w$, and let $\A$ be the representation matrix of
an $\SHF(N;N,2,\{1,w\})$. Suppose that some row of $\A$ is of type at most
$w$ and all $\SHF(N-1;N-1,2,\{1,w\})$ in standard form are permutation matrices.
Then $\A$ is a permutation matrix.
\end{lem}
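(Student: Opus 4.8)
The plan is to leverage Lemma~\ref{ext_shf} together with an induction on $N$ (with the base case already supplied by Theorem~\ref{tight_1} for $N \leq 2w+1$) to reduce an $\SHF(N;N,2,\{1,w\})$ with a low-type row to one on $N-1$ columns. First I would note that by hypothesis $\A$ has some row of type $i_0 \leq w$; after permuting $0$'s and $1$'s in that row and reordering columns, we may assume it is row $1$ with $\A(1,1)=1$. If $\A$ contains a row of type $1$, then Lemma~\ref{type_1} (whose hypothesis—that all $\SHF(N-1;N-1,2,\{1,w\})$ in standard form are permutation matrices—is exactly what we are assuming) immediately gives that $\A$ is a permutation matrix, and we are done. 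So assume $\A$ has no row of type $1$.

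Next I would apply Lemma~\ref{ext_shf}: under these hypotheses ($w \geq 4$, $N \leq 3w$, row $1$ of type $i_0 \leq w$ with $\A(1,1)=1$, no type-$1$ row), the submatrix $\B$ obtained by deleting row $1$ and column $1$ is an $\SHF(N-1;N-1,2,\{1,w\})$. Put $\B$ into standard form by permuting $0$'s and $1$'s in its rows; by the inductive hypothesis $\B$ is then a permutation matrix of degree $N-1$. The key step is then to bound the number of $1$'s that can appear in column $1$ of $\A$ among the rows $2,\ldots,N$: since $\B$ is a permutation matrix, each such row has exactly one $1$ inside $\B$, so adding column $1$ back gives each of rows $2,\ldots,N$ at most two $1$'s total. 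Combined with row $1$ having $i_0$ ones, this forces $\A$ to be very sparse. The plan is to argue that column $1$ of $\A$ must in fact contain exactly one $1$ (namely $\A(1,1)$) and that row $1$ must be of type exactly $1$, contradicting our standing assumption unless $\A$ is already a permutation matrix. Concretely, if some row $r \geq 2$ had a $1$ in column $1$, then that row $r$ together with row $1$ would overlap; one then exhibits a column-set pair $(C_1,C_2)$ that no row can separate—using that the $N-1$ rows of $\B$ each have their unique $1$ in distinct columns of $\{2,\ldots,N\}$, so a separating row for a carefully chosen pair supported on those columns plus column $1$ is forced to be row $1$ or row $r$, and neither works because they agree (up to the relevant entries) on column $1$. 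Iterating this, every row other than row $1$ has its $1$'s confined to $\{2,\ldots,N\}$, matching the permutation structure of $\B$; and if row $1$ had type $i_0 \geq 2$, then two of its $1$'s, say in columns $1$ and $j$, would make row $1$ overlap the unique row of $\B$ whose $1$ sits in column $j$, again producing an unseparated pair. Hence $i_0 = 1$, so after all $\A$ has a type-$1$ row, and Lemma~\ref{type_1} finishes it.

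\textbf{Main obstacle.}
The hard part will be the bookkeeping in the overlap argument of the previous paragraph: having reduced to $\B$ a permutation matrix, one must produce, for each offending overlap, an explicit $w$-subset $C_2$ and singleton $C_1$ whose separating row is pinned down to a unique candidate that then demonstrably fails, and this must be done for all $N$ rows without running out of columns—this is where the inequalities $w \geq 4$ and $N \leq 3w$ get used (they guarantee enough "spare" columns outside the forced support to build the bad pairs, exactly as in Lemma~\ref{ext_shf}'s Steps $1$ through $w$). A secondary subtlety is ensuring the standard-form normalization of $\B$ is compatible with keeping row $1$'s structure intact: permuting $0$'s and $1$'s within rows $2,\ldots,N$ changes their column-$1$ entries, so one should track the unnormalized matrix when reasoning about overlaps with row $1$, or equivalently argue purely in terms of which pairs are separable (an equivalence-invariant notion). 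Once the reduction to a type-$1$ row is secured, the conclusion is immediate from Lemma~\ref{type_1}, and the whole argument is a clean induction with Theorem~\ref{tight_1} as the base.
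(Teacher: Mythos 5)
Your plan follows essentially the same route as the paper's proof: reduce to the no-type-$1$-row case via Lemma~\ref{type_1}, invoke Lemma~\ref{ext_shf} to make $\B$ a permutation matrix, then force the column-$1$ entries of rows $2,\ldots,N$ to be $0$ by exhibiting pairs $(\{c_x\}, C''_x \cup \{1\})$ whose only possible separating row is row $x$, and finally derive a contradiction from row $1$ having type $i_0 \geq 2$ (the paper does this directly with the unseparable pair $(\{1\},\{2,3\})$). The details you defer are exactly the ones the paper executes, so the proposal is correct in substance.
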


\begin{proof}
If $\A$ contains a row of type 1, we can use Lemma \ref{type_1} to show that
$\A$ is a permutation matrix. For the remainder of this proof, we may assume that
$\A$ contains no row of type 1. Assume w.l.o.g. that the first row of $\A$ is of
type $i_0$ where $2 \leq i_0 \leq w$.

Suppose to the contrary that $\A$ is not a permutation matrix. By permuting the
columns of $\A$ if necessary, we may assume that row 1 is $1^{i_0}0^{N-i_0}$.
Let $\B$ be the $(N-1) \times (N-1)$ submatrix of $\A$ by deleting the first row
and first column of $\A$.

By Lemma \ref{ext_shf}, we have that $\B$ is an $\SHF(N-1;N-1,2,\{1,w\})$, and
hence it is a permutation matrix. For row $x$ of $\A$, $x = 2,\ldots,N$, let $c_x$
be the unique column of $\A$ that contains a 1 in row $x$. Consider the column
set pair $(C_x = \{c_x\}, C'_{x} = C''_{x} \cup \{1\})$ where $C''_{x}$ is some
set of $w-1$ columns not containing $c_x$ whose entries on row 1 contains at least
one 0. This is possible since $N \geq w+2 \geq i_0 + 2$. The only row that can
separate this column set pair is row $x$, which forces its first entry to be a 0.
Thus we have
shown that
\begin{displaymath}
\A =
\left( \begin{array}{c|ccc}
     1   & 1 &  &  \\ \hline
     0   &     &     &   \\
     \vdots   &     & \B  &   \\
     0   &     &     &   \\    
    \end{array} \right).
\end{displaymath}
Now consider $(C_1 = \{1\}, C_2 = \{2,3\})$, which cannot be separated by $\A$;
a contradiction. 
\end{proof}

\begin{thm} \label{tight_2}
Let $w$, $N$ be positive integers such that $ w \geq 4$
and $2w+2 \leq N \leq 3w$. Suppose there exists an
$\SHF(N; N, 2, \{1,w\})$. Then its representation
matrix in standard form is a permutation matrix of degree $N$.
\end{thm}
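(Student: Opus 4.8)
```latex
The plan is to prove Theorem~\ref{tight_2} by induction on $N$, using the lemmas just established together with the base case supplied by Theorem~\ref{tight_1}. Specifically, I would fix $w \geq 4$ and induct on $N$ over the range $2w+2 \leq N \leq 3w$, with the understanding that for $N = 2w+1$ (and indeed for all $w+1 \leq N \leq 2w+1$) the statement that every $\SHF(N;N,2,\{1,w\})$ in standard form is a permutation matrix is exactly Theorem~\ref{tight_1}. Thus the inductive hypothesis ``all $\SHF(N-1;N-1,2,\{1,w\})$ in standard form are permutation matrices'' is available at every step of the range $2w+2 \leq N \leq 3w$.

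The core of the argument is then a dichotomy on the minimum row type of $\A$. First I would put $\A$ in standard form, so every row has type at most $\lfloor N/2 \rfloor$, and (as noted before Lemma~\ref{separate.lem}) we may assume no row has type $0$. If $\A$ has some row of type at most $w$, then by the inductive hypothesis together with Lemma~\ref{extension} (whose hypotheses $w \geq 4$, $w+1 \leq N \leq 3w$ are all met), we conclude immediately that $\A$ is a permutation matrix, and we are done. So the remaining case to rule out is that every row of $\A$ has type strictly greater than $w$, i.e.\ type in the range $w+1, \ldots, \lfloor N/2 \rfloor$. Since $N \leq 3w$ we have $\lfloor N/2 \rfloor \leq \lfloor 3w/2 \rfloor$, so each row has between $w+1$ and $\lfloor 3w/2 \rfloor$ entries equal to $1$.

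To eliminate this last case I would return to the counting argument in the style of Theorem~\ref{bound_1}, now with $n = N$. By Lemma~\ref{separate.lem}, a row of type $i$ with $i \geq w$ separates exactly $i\binom{n-i}{w} + \binom{i}{w}(n-i)$ column pairs $(C_1,C_2)$, while the total number of pairs to be separated is $\T = \binom{N}{w}(N-w) = N\binom{N-1}{w}$. The aim is to show that if every one of the $N$ rows has type $\geq w+1$, then the sum over rows of the quantities $i\binom{n-i}{w} + \binom{i}{w}(n-i)$ is strictly less than $\T$, which is the desired contradiction. Concretely, I would bound the per-row separation count by its maximum over $w+1 \leq i \leq \lfloor N/2 \rfloor$; since $N \leq 3w$ the function $i \mapsto i\binom{N-i}{w} + \binom{i}{w}(N-i)$ on this interval stays well below $\binom{N-1}{w}$ (its value is dominated by the type-$1$ count, which itself is only achievable for a permutation-type row structure), and multiplying the maximum by $N$ still falls short of $N\binom{N-1}{w} = \T$.

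The main obstacle I anticipate is making the final counting estimate tight and clean across the full sub-range $2w+2 \leq N \leq 3w$ and all $w \geq 4$: unlike in Theorem~\ref{bound_1}, here the rows can have type as large as $\lfloor 3w/2 \rfloor$, so the term $\binom{i}{w}(N-i)$ is genuinely present and must be controlled, and one must verify that even the best-case type (whichever $i \in \{w+1,\ldots,\lfloor N/2\rfloor\}$ maximizes the separation count) does not let $N$ rows cover all $\T$ pairs. I would handle this by reducing to a one-variable inequality in $i$ (with $N$ as a parameter constrained by $N \leq 3w$), checking monotonicity via a ratio test analogous to Lemma~\ref{binomial}, and confirming the boundary values $i = w+1$ and $i = \lfloor N/2 \rfloor$ directly; the small remaining cases (e.g.\ $w = 4$) can be verified by hand if the general estimate is too delicate near the endpoints. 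Once this contradiction is in place, the dichotomy is complete: $\A$ must have a row of type at most $w$, and then Lemma~\ref{extension} finishes the proof.
```
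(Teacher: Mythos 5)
Your proposal follows essentially the same route as the paper: induction on $N$ with Theorem~\ref{tight_1} as the base case, reduction via Lemma~\ref{extension} to showing that some row has type at most $w$, and a counting argument comparing the per-row separation count $\beta_i = i\binom{N-i}{w} + \binom{i}{w}(N-i)$ for $i \geq w+1$ against the average $\binom{N-1}{w} = \T/N$. The one step you leave as a plan --- verifying $\beta_i < \binom{N-1}{w}$ for all $i \in \{w+1,\ldots,\lfloor N/2\rfloor\}$ --- is exactly what the paper carries out, by bounding $\beta_i \leq N\binom{N-i}{w}$ and $\binom{N-1}{w} \geq (3/2)^w\binom{N-i}{w}$, so that $(3/2)^w > 3w \geq N$ settles $w \geq 8$ and a direct check handles $4 \leq w \leq 7$.
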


\begin{proof}
The proof is by induction on $N = 2w+1,\ldots,3w$. The base case $N = 2w+1$ is given
by Theorem \ref{tight_1}. Suppose that $N > 2w+1$ and all $\SHF(N-1;N-1,2,\{1,w\})$
in standard form are permutation matrices. By Lemma \ref{extension}, we only need to
show that some row of type at most $w$ exists.

Let $\A$ be an $\SHF(N;N,2,\{1,w\})$ in standard form. Fix some $i$ where
$w+1 \leq i \leq N/2$. The average number of column pairs separated by a row is
$$\alpha = \frac{(N-w){N \choose w}}{N} = {N-1 \choose w}.$$ Let $\beta_i$ be the
number of column pairs separated by a row of type $i$, then
$$\beta_i = i{N-i \choose w} + (N-i){i \choose w} \leq N{N-i \choose w}$$
column pairs. Since $i \geq w+1$, we have
\begin{align*}
\alpha &= {N-1 \choose w} \\
			 &= \frac{(N-1)(N-2)\cdots(N-w)}{(N-w-1)(N-w-2)\cdots(N-2w)}{N-w-1 \choose w} \\
			 &\geq \frac{(N-1)(N-2)\cdots(N-w)}{(N-w-1)(N-w-2)\cdots(N-2w)}{N-i \choose w} \\
			 &\geq \left( \frac{N-1}{N-w-1} \right)^w {N-i \choose w} \\
			 &\geq \left( \frac{3w+1-1}{3w+1-w-1} \right)^w {N-i \choose w} \\
			 &= \left( \frac{3}{2} \right)^w {N-i \choose w}.
\end{align*}
For $w \geq 8$, one can check that $\left( \frac{3}{2} \right)^w > 3w \geq N$, so
$\alpha > \beta_i$. It is straightforward to compute $\alpha$ and $\beta_i$ for
$4 \leq w \leq 7$ and confirm that $\alpha > \beta_i$ for all relevant values of $i$.
Since $\alpha > \beta_i$ for every $i \geq w+1$ and $\A$ contains no row of type
$N/2 + 1$ or higher, there must exist some row of type at most $w$. 
\end{proof}

Finally, we give a bound similar to Theorem \ref{bound_1}.

\begin{thm} \label{bound_2}
Let $w$, $N$ be positive integers such that $w \geq 4$
and $2w+2 \leq N \leq 3w$. Suppose there exists an
$\SHF(N; n, 2, \{1,w\})$. Then 
$n \leq N.$
\end{thm}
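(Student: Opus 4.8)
The plan is to deduce Theorem~\ref{bound_2} from Theorems~\ref{tight_2} and~\ref{perm_bound} in exactly the same way that Theorem~\ref{bound_1} follows from Theorems~\ref{tight_1} and~\ref{perm_bound}. First I would check that the hypotheses of Theorem~\ref{perm_bound} are met: we are in the range $w \geq 4$, $2w+2 \leq N \leq 3w$, so in particular $N \geq w+1$ and $w \geq 3$, and Theorem~\ref{tight_2} tells us precisely that every $\SHF(N;N,2,\{1,w\})$ in standard form is a permutation matrix of degree $N$. Thus the ``permutation matrix'' hypothesis of Theorem~\ref{perm_bound} holds for this $N$ and $w$.

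Having verified the hypotheses, I would simply invoke Theorem~\ref{perm_bound}: if an $\SHF(N;n,2,\{1,w\})$ exists, then $n \leq N$. That is the entire argument. Concretely, the one-line proof reads: by Theorem~\ref{tight_2}, all $\SHF(N;N,2,\{1,w\})$ in standard form are permutation matrices of degree $N$; since $w \geq 4 \geq 3$ and $N \geq 2w+2 \geq w+1$, Theorem~\ref{perm_bound} applies and gives $n \leq N$.

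There is essentially no obstacle here, since all the work has been front-loaded into Theorem~\ref{tight_2} (and, through it, into Lemmas~\ref{type_1}, \ref{ext_shf}, and~\ref{extension}) and into the extension argument of Theorem~\ref{perm_bound}. The only thing to be careful about is the bookkeeping on the parameter ranges: one must confirm that the interval $2w+2 \leq N \leq 3w$ (nonempty only for $w \geq 2$, and the theorem assumes $w \geq 4$) sits inside the range where both cited results are valid, which it does. If one instead wanted a self-contained proof mirroring Theorem~\ref{bound_1}, one could repeat the counting-plus-overlap argument, but that is unnecessary given Theorem~\ref{perm_bound}; the clean route is the two-line deduction above.

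\begin{proof}
By Theorem~\ref{tight_2}, every $\SHF(N;N,2,\{1,w\})$ in standard form is a permutation matrix of degree $N$. Since $w \geq 4 \geq 3$ and $N \geq 2w+2 \geq w+1$, we may apply Theorem~\ref{perm_bound} with these values of $N$ and $w$, and conclude that if an $\SHF(N;n,2,\{1,w\})$ exists, then $n \leq N$.
\end{proof}
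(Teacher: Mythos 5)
Your proposal is correct and is essentially identical to the paper's own proof, which likewise cites Theorem~\ref{tight_2} to establish that all $\SHF(N;N,2,\{1,w\})$ in standard form are permutation matrices and then concludes via Theorem~\ref{perm_bound}. The extra parameter-range bookkeeping you include is accurate but does not change the argument.
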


\begin{proof}
By Theorem \ref{tight_2}, all $\SHF(N;N,2,\{1,w\})$ in standard form are permutation
matrices, hence the proof follows from Theorem \ref{perm_bound}. 
\end{proof}

\section{ Binary $\FPC$ with $w=3$ and $N = 8 , 9$}
\label{exceptions.sec}

In this section we treat the cases $w=3$ when $N=2w+2=8$ and $N=3w=9$.
We show that Theorem \ref{tight_2} and 
Theorem \ref{bound_2} proven in the previous section
remain valid for $w=3$. The reason for a separate 
discussion of the case $w=3$ is that the proof for case $w \geq 4$
cannot be used for $w=3$.

\subsection{The case $w=3$ and $N=8$ }
We first consider the case of $N=8$.
Before we prove our main result,  we prove several useful lemmas
of a general nature.

Given two rows of an SHF, we define the {\it overlap} of the two rows to be the number
of columns in which both rows contain a 1.

\begin{lem} \label{overlapping}
 Let $\A$ be the representation matrix of an $\SHF(N; n, 2, \{1,w\})$.
 Let $r_i$ be a row of type $i$ and let $r_j$ be a row of type $j$
 of $\A$. Suppose that $r_i$ and $r_j$ have overlap equal to $s$.
 Then the number of column pairs $(C_1, C_2)$ that are separated by
 both of $r_i$
 and $r_j$ is
  \begin{equation}
  \label{overlap.eq}
   \theta = s{n-i-j+s \choose w}+ (n-i-j+s){s \choose w} 
   + (i-s){j-s \choose w}+ (j-s){i-s \choose w} .
   \end{equation} 
\end{lem}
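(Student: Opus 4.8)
The plan is to count, by inclusion over the positions of the single column $C_1$ and the $w$ columns $C_2$, exactly which pairs $(C_1,C_2)$ are separated simultaneously by both rows $r_i$ and $r_j$. Set up the following partition of the $n$ columns according to their entries in the two rows: let $A$ be the set of $s$ columns where both $r_i$ and $r_j$ have a $1$; let $B$ be the set of $i-s$ columns where $r_i$ has a $1$ but $r_j$ has a $0$; let $D$ be the set of $j-s$ columns where $r_j$ has a $1$ but $r_i$ has a $0$; and let $E$ be the remaining $n-i-j+s$ columns, where both rows have a $0$. A row separates a pair $(C_1=\{c\},C_2)$ precisely when the entry of that row in column $c$ differs from the (common) entry of that row in all columns of $C_2$; equivalently, either $c$ carries the row's $1$ and all of $C_2$ carries $0$, or $c$ carries $0$ and all of $C_2$ carries $1$.

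Next I would split into cases according to which block $A,B,D,E$ the column $c$ lies in, and in each case force the constraints imposed by requiring \emph{both} rows to separate. If $c\in A$: both rows read $1$ at $c$, so both require all of $C_2$ to read $0$, i.e.\ $C_2\subseteq E$; this contributes $s\binom{n-i-j+s}{w}$. If $c\in E$: both rows read $0$ at $c$, so both require all of $C_2$ to read $1$, i.e.\ $C_2\subseteq A$; this contributes $(n-i-j+s)\binom{s}{w}$. If $c\in B$: then $r_i$ reads $1$ at $c$ and $r_j$ reads $0$, so $r_i$ forces $C_2$ into the $r_i$-zero columns $D\cup E$ while $r_j$ forces $C_2$ into the $r_j$-one columns $A\cup D$; the intersection is $D$, so $C_2\subseteq D$, contributing $(i-s)\binom{j-s}{w}$. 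Symmetrically, if $c\in D$ we get $C_2\subseteq B$, contributing $(j-s)\binom{i-s}{w}$. These four cases are exhaustive and mutually exclusive in the location of $c$, and within each case the choices of $c$ and of $C_2$ are independent, so summing the four products gives exactly the claimed formula~(\ref{overlap.eq}). (Here one uses the usual convention $\binom{m}{w}=0$ when $m<w$, which takes care of the subcases where a block is too small to contain $C_2$.)

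I do not anticipate a genuine obstacle here; the lemma is a bookkeeping computation. The one point that needs care is the case analysis on $c\in B$ and $c\in D$: it is tempting to overlook that a column where the two rows \emph{disagree} still constrains both rows, just in opposite directions, and that the two constraints combine to a single block ($D$ or $B$ respectively) rather than to a union. Making the four-block partition explicit at the outset is what keeps this transparent and guarantees that the cases are disjoint and complete.
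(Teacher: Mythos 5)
Your proof is correct and follows essentially the same route as the paper: the paper partitions the columns into the four blocks $f(1,1)$, $f(1,0)$, $f(0,1)$, $f(0,0)$ (your $A$, $B$, $D$, $E$) and identifies the same four cases for the location of $C_1$ and the forced location of $C_2$, yielding the four summands of (\ref{overlap.eq}). Your write-up is in fact slightly more explicit than the paper's about why the disagreement blocks force $C_2$ into a single block rather than a union, but the argument is the same.
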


\begin{proof}
For $k, \ell \in \{0,1\}$, let $f(k, \ell)$
denote the set of columns in which $r_i$ has the entry $k$
and $r_j$ has the entry $\ell$.
Then $|f(1,1)| = s$,  $|f(1,0)| = i-s$, 
$|f(0,1)| = j-s$, and   $|f(0,0)| = n-i-j+s$. 
We have repeated column pairs $(C_1,C_2)$ in the following four situations:
\begin{enumerate}
\item $C_1 \subseteq f(1,1)$, $C_2 \subseteq f(0,0)$,
\item $C_1 \subseteq f(0,0)$, $C_2 \subseteq f(1,1)$,
\item $C_1 \subseteq f(1,0)$, $C_2 \subseteq f(0,1)$, and
\item $C_1 \subseteq f(0,1)$, $C_2 \subseteq f(1,0)$.
\end{enumerate}
These four cases correspond to the four summands in equation (\ref{overlap.eq}).
\end{proof}

In general, we will consider an SHF one row at a time.
Suppose the rows of an $\SHF(N; n, 2, \{1,w\})$ are denoted
$r_1, \dots , r_n$. For $1 \leq i \leq n$, define
$\mu_i$ to be the number of column pairs $(C_1, C_2)$
separated by $r_i$ that were not separated by $r_1, \dots , r_{i-2}$ or  $r_{i-1}$.

\begin{lem} \label{6r5c}
Let $\A$ be the representation matrix in standard form 
of an $\SHF(6; 5, 2, \{1,3\})$.
Then by permuting the rows of $\A$ we have that the first five rows
are of type 1 and the last row is of any type.
\end{lem}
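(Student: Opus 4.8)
The claim is that any $\SHF(6;5,2,\{1,3\})$ in standard form has (after permuting rows) its first five rows of type $1$, with no constraint on the sixth row. Since we are in standard form with $n=5$, every row has at most $\lfloor 5/2 \rfloor = 2$ entries equal to $1$, and (as noted in the text) no row is of type $0$; so every row is of type $1$ or type $2$. Thus the content of the lemma is: \emph{at most one row can be of type $2$.} I would prove the contrapositive-flavoured statement by a counting argument in the spirit of Theorem~\ref{bound_1}(iii): suppose two or more rows are of type $2$, and show that the six rows together cannot separate all the required column pairs.

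\textbf{Step 1: Set up the count.} The total number of column set pairs $(C_1,C_2)$ with $|C_1|=1$, $|C_2|=w=3$ that must be separated is $\T = \binom{5}{3}(5-3) = \binom{5}{1}\binom{5}{3}$. Wait---more carefully, $\T = \binom{n}{w}(n-w) = \binom{5}{3}\cdot 2 = 20$. By Lemma~\ref{separate.lem}, a row of type $1$ separates $1\cdot\binom{4}{3} = 4$ pairs, and a row of type $2$ (note $i=2 < w = 3$) also separates $2\cdot\binom{3}{3} = 2$ pairs. So a type-$1$ row is strictly better than a type-$2$ row, and six type-$1$ rows separate at most $6\cdot 4 = 24 \geq 20$ pairs---so the count is tight enough that every ``wasted'' separation matters.

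\textbf{Step 2: Eliminate two type-$2$ rows.} Suppose rows $r$ and $r'$ are both of type $2$. Each separates only $2$ pairs, so replacing the naive bound: with two type-$2$ rows and four type-$1$ rows the total number of separated pairs, counted with multiplicity, is at most $4\cdot 4 + 2\cdot 2 = 20 = \T$. Hence equality must hold everywhere: the four type-$1$ rows must pairwise separate disjoint pair-sets and be disjoint from what $r,r'$ separate, and $r,r'$ themselves must separate disjoint sets. I would then derive a contradiction by exhibiting a specific column pair that no row separates, or by showing the disjointness forces a structure (e.g.\ the four type-$1$ rows have their $1$'s in four distinct columns, and $r,r'$ are then forced into positions that either overlap---contradicting Step~1's disjointness requirement via Lemma~\ref{overlapping}, which for two type-$2$ rows with overlap $s\geq 1$ gives $\theta \geq s\binom{5-4+s}{3} = s\binom{s+1}{3}$, nonzero when $s\geq 2$, plus the cross terms---or are disjoint but then together with the type-$1$ rows cover only $\le 5$ columns with $1$'s, leaving a pair unseparated). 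The cleanest route is probably: equality in the count forces the $1$-positions of all six rows to be pairwise ``spread'' in a way that is impossible with only $5$ columns, since $4$ type-$1$ rows plus $2$ type-$2$ rows need their supports to be nearly disjoint.

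\textbf{Main obstacle.} The delicate point is Step~2: the counting bound $4\cdot4 + 2\cdot 2 = 20 = \T$ is exactly tight, so one cannot conclude a contradiction from counting alone---one must analyze the \emph{equality case} and show it is not realizable, which requires a small combinatorial case analysis on where the $1$'s sit among the $5$ columns. I expect one needs to invoke Lemma~\ref{overlapping} to control double-counting between the two type-$2$ rows and between type-$2$ and type-$1$ rows, and then check that in every configuration some pair $(\{x\},C_2)$ escapes separation. Because $n=5$ and $w=3$ are tiny, this is a finite check, but it is the substance of the argument. Once at most one type-$2$ row is established, the lemma follows immediately by permuting that single row (if it exists) to the last position.
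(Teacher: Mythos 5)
Your overall strategy is the right one and is essentially the paper's: count the $\T=\binom{5}{3}\cdot 2=20$ pairs, note that in standard form every row is of type $1$ (separating $4$ pairs) or type $2$ (separating $2$ pairs), and show that more than one type-$2$ row is impossible. The difference is organizational: the paper argues sequentially, ordering rows so that $\mu_1\geq\cdots\geq\mu_6$ and showing rows $1$ through $4$ are forced to be type-$1$ rows in four distinct columns (each must contribute at least $\lceil 16/5\rceil=4$, $\lceil 12/4\rceil=3$, $\lceil 8/3\rceil=3$ new pairs, which a type-$2$ row cannot), after which a type-$2$ fifth row would overlap a type-$1$ row and contribute at most one new pair by Lemma~\ref{overlapping}. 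Your version replaces this with a single global count $4\cdot 4+2\cdot 2=20=\T$ followed by an equality-case analysis; both land on the same final contradiction.

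The gap is that you leave the decisive equality-case step as a sketch, and the case split you propose is aimed at the wrong overlap. You branch on whether the two type-$2$ rows $r,r'$ overlap \emph{each other} and compute $\theta$ for a pair of type-$2$ rows; but two type-$2$ rows with overlap $s=1$ actually share $\theta=0$ pairs (in equation~(\ref{overlap.eq}) with $i=j=2$, $n=5$, $w=3$, $s=1$, every summand vanishes), so that branch gives no contradiction, and your alternative branch ends with the unproved assertion that a pair is ``left unseparated.'' The contradiction must come from a type-$2$ row overlapping a type-$1$ row: equality forces the four type-$1$ rows to separate disjoint $4$-sets of pairs, hence to sit in four distinct columns, leaving only one column free; so each type-$2$ row's support of size $2$ meets the column of some type-$1$ row, and Lemma~\ref{overlapping} with $i=1$, $j=2$, $s=1$ gives $\theta=\binom{3}{3}=1$ shared pair, violating the disjointness that equality demands. (More concretely, a type-$1$ row with its $1$ in column $a$ and a type-$2$ row with support $\{a,b\}$ both separate $(\{a\},\{1,\ldots,5\}\setminus\{a,b\})$.) With that substitution your argument closes; as written, the ``substance of the argument'' you defer to a finite check is exactly the step that needs to be carried out.
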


\begin{proof}
The proof of the lemma is by straightforward counting.
First of all note that there are in total $T={5 \choose 3}2=20$ column pairs $(C_1,C_2)$ 
of $\A$ to be separated (where $|C_1|=1$, $|C_2|=3$ and
$C_1 \cap C_2= \emptyset$). On average each row separates
$\frac{20}{6} > 3$ new column pairs. 
From Lemma \ref{separate.lem}, a row of type 1
separates four column pairs and a row of type 2 separates
two column pairs. 
Suppose without loss of generality that $\mu_1 \geq \mu_2 \geq \dots \geq \mu_6$.
It follows that the first row is of type
1. Row 2 has to separate at least $\lceil \frac{20-4}{5} \rceil=4$
new column pairs, so row 2 is of type 1. Row 3 has to separate at least
$\lceil \frac{16-4}{4} \rceil=3$ column pairs, so row 3 is of type 1.
Row 4 also has to separate at least
$\lceil \frac{12-4}{3} \rceil=3$ column pairs, so row 4 is of type 1.
Now row 5 has to separate at least $\lceil \frac{8-4}{2} \rceil=2$ 
column pairs. If row 5 is of type 2, then it has to overlap at least
one of the first four rows (which are all of type 1). Then row 5 can
separate at most one new column pair, from Lemma \ref{overlapping}. It follows
that row 5 is also of type 1 and the first five rows separate all the column pairs.   
\end{proof}

\begin{lem} \label{7r6c}
Let $\A$ be the representation matrix in standard form 
of an $\SHF(7; 6, 2, \{1,3\})$
Then by permuting the rows of $\A$ we have that the first six rows
are of type 1 and the last row is of any type.
\end{lem}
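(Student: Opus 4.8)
The plan is to establish the equivalent statement that \emph{every} $\SHF(7;6,2,\{1,3\})$ in standard form has at least six rows of type~$1$; the lemma then follows by permuting the rows so that those six occupy the first six positions, the remaining row being of arbitrary type. The greedy ``largest-$\mu_i$-first'' count that proved Lemma~\ref{6r5c} is not strong enough here, because after two type-$1$ rows have been placed a further row need only separate $\lceil(60-20)/5\rceil=8$ new pairs, and a type-$2$ row can do exactly that, so type~$1$ is not forced. I would instead argue globally, combining a double count of (pair, separating-row) incidences with an analysis of the columns' \emph{blocks} of pairs.

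Write $\T={6\choose 3}\cdot 3=60$ for the number of pairs to be separated and let $a_i$ be the number of rows of type~$i$; in standard form only types $1,2,3$ occur, so $a_1+a_2+a_3=7$. By Lemma~\ref{separate.lem} a row of type $1,2,3$ separates $10,8,6$ pairs respectively, so the number of (pair, separating-row) incidences is $M:=10a_1+8a_2+6a_3=42+4a_1+2a_2$, and $M\ge 60$ since every pair is separated. Assume for contradiction that $a_1\le 5$. Using $a_2\le 7-a_1$ one gets $M\le 58<60$ when $a_1\le 1$; hence $2\le a_1\le 5$.

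For the remaining values I would use the following facts. Call the $10$ pairs $(\{c\},W)$ with $|W|=3$ the \emph{block} of the column $c$. A type-$1$ row separates all $10$ pairs in the block of its $1$-column and no pair in any other block; so if $K$ is the set of the $k\le a_1$ columns carrying a type-$1$ row and $\overline{K}$ is its complement, the type-$1$ rows separate exactly $10k$ pairs, all inside blocks of columns of $K$, and the $10(6-k)$ pairs in blocks of columns of $\overline{K}$ must be separated by the $7-a_1$ non-type-$1$ rows. A type-$3$ row separates at most one pair in any single block. A type-$2$ row separates pairs of the block of $c$ only when $c$ is one of its two $1$-columns, and then it separates exactly the $4$ pairs $(\{c\},W)$ with $W$ disjoint from both $1$-columns; so if $P_c$ denotes the set of columns $d$ for which some type-$2$ row has its $1$'s exactly in $\{c,d\}$, the type-$2$ rows separate precisely $10-{5-|P_c|\choose 3-|P_c|}$ pairs of the block of $c$, the unseparated ones being the $3$-subsets $W\supseteq P_c$. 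As ${5-p\choose 3-p}$ equals $10,6,3,1,0$ for $p=0,1,2,3,\ge 4$, covering the block of each $c\in\overline{K}$ forces $a_3\ge{5-|P_c|\choose 3-|P_c|}$ for every such $c$; moreover $|P_c|\le a_2$ and $\sum_{c\in\overline{K}}|P_c|\le 2a_2$.

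It then remains to treat the cases $a_1=5,4,3,2$. If $a_1=5$ then $a_3\le 2$ while any $c\in\overline{K}$ has $|P_c|\le 2$ and so forces $a_3\ge 3$. If $a_1=4$ (so $a_2+a_3=3$ and $|\overline{K}|\ge 2$), fix $c\in\overline{K}$: if $|P_c|\le 2$ then $a_3\ge 3$, hence $a_2=0$, hence $|P_c|=0$, so in fact $a_3\ge 10$; if $|P_c|=3$ then $a_2=3$ and $a_3=0$, but then all three type-$2$ rows contain a $1$ in column $c$, so a second column $c'\in\overline{K}$ has $|P_{c'}|\le 1$ and requires $a_3\ge 6$. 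If $a_1=3$ (so $a_2+a_3=4$ and $|\overline{K}|\ge 3$), then $\sum_{c\in\overline{K}}|P_c|\le 2a_2\le 8$ yields a column with $|P_c|\le 2$, so $a_3\ge 3$ and $a_2\le 1$, so some $c'\in\overline{K}$ has $|P_{c'}|=0$ and $a_3\ge 10$. Finally, if $a_1=2$ then $M=50+2a_2\ge 60$ forces $a_2=5$ and $a_3=0$, so each of the $\ge 4$ columns of $\overline{K}$ has $|P_c|\ge 4$, whence $\sum_{c\in\overline{K}}|P_c|\ge 16>10\ge 2a_2$. Every case is a contradiction, so $a_1\ge 6$, as required. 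The one delicate point I anticipate is the exact partner-column count of the previous paragraph; once it and the bound $\sum_{c\in\overline{K}}|P_c|\le 2a_2$ are in place, the four numerical cases follow at once.
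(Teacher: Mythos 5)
Your proof is correct, but it takes a genuinely different route from the paper. The paper's proof is recursive: an averaging argument ($60/7>8$, and only a type-1 row separates more than $8$ pairs) forces one type-1 row; deleting that row and its $1$-column leaves an $\SHF(6;5,2,\{1,3\})$, to which Lemma~\ref{6r5c} applies; the entries $a,\ldots,e$ below the first row's $1$ are then shown to be zero (or a type-1 seventh row is found) by bounding the first six rows' coverage at $5\times 10+4=54$ via the overlap lemma. Your argument is instead a global one: a double count of (pair, separating-row) incidences, $M=42+4a_1+2a_2\geq 60$, rules out $a_1\leq 1$, and the ``block'' analysis --- type-1 rows cover whole blocks, a type-2 row with $1$'s in $\{c,d\}$ misses exactly the $W\supseteq\{d\}$ in block $c$, a type-3 row covers one pair per block --- turns each remaining case $a_1=2,3,4,5$ into a short numerical contradiction via $a_3\geq\binom{5-|P_c|}{3-|P_c|}$ and $\sum_{c\in\overline{K}}|P_c|\leq 2a_2$. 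I checked the separation counts and all four cases; they are right (and type-0 rows, which the paper's standard-form convention excludes anyway, would only strengthen your inequalities). What each approach buys: the paper's peeling template is reused almost verbatim for Lemmas~\ref{7r5c}, \ref{8r6c} and \ref{8r7c}, so it amortizes well across the section, whereas your proof is self-contained (no dependence on Lemma~\ref{6r5c}), and its key inequality $a_3\geq\binom{5-|P_c|}{3-|P_c|}$ makes the ``impossible due to overlapping'' step of the paper fully explicit --- at the cost of a case analysis that would have to be redone from scratch for other parameters.
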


\begin{proof}
There are in total $T={6 \choose 3}3=60$ column pairs $(C_1,C_2)$ 
of $\A$ to be separated (where $|C_1|=1$, $|C_2|=3$ and
$C_1 \cap C_2= \emptyset$). On average each row separates
$\frac{60}{7} > 8$ column pairs.
From Lemma \ref{separate.lem},  a row of type 1
separates ten column pairs, a row of type 2
separates eight column pairs, and a row of type 3
separates six column pairs. 
Suppose without loss of generality that $\mu_1 \geq \max \{\mu_2, \dots ,\mu_7\}$.
It follows
that the first row of $\A$ is of type 1. By permuting the 
columns if necessary, we assume that the first row 
has the 
entry 1 in the first column. Then $\A$  has the form
\begin{displaymath}
\A =
\left( \begin{array}{c|ccccc}
    1  & 0 & 0 & 0 & 0 & 0  \\ \hline
    a  &    &    &    &    &   \\
    b  &    &    &    &    &   \\
    c  &    &    &   \B    &   \\
    d  &    &    &    &    &   \\
    e  &    &    &    &    &   \\
    f  &    &    &    &    &     
    \end{array} \right)
\end{displaymath}
where $\B$ is the representation matrix of an $\SHF(6;5,2,\{1,3\})$.
By Lemma \ref{6r5c}, we may assume
\begin{displaymath}
\B =
\left( \begin{array}{ccccc}
    1 & 0 & 0 & 0 & 0  \\
    0 & 1 & 0 & 0 & 0 \\
    0 & 0 & 1 & 0 & 0 \\
    0 & 0 & 0 & 1 & 0 \\
    0 & 0 & 0 & 0 & 1 \\
    * & * & * & * & *
    \end{array} \right).
\end{displaymath}
So rows $2, \dots , 6$ have type $1$ or $2$.
If $a=b=c=d=e=0$, then the first six rows of $\A$ are of type 1 and they separate all 60 column pairs.
Suppose that not all of $a, b,c,d, e$ are 0.  
Then, from Lemma \ref{separate.lem} and Lemma \ref{overlapping},  the first six rows of $\A$ separate at most 
$5\times 10 + 4=54$ column pairs, and this occurs if and only if
there is exactly one nonzero element in $\{a, \dots , f\}$. It follows that row 7 has to separate
at least six new column pairs. This is impossible due to overlapping with rows 1 to 6,
unless row $7$ is of type 1. In this case we can interchange row 7 with one of the first
six rows to obtain the desired conclusion.
\end{proof}

\begin{thm}\label{tight_3}
The representation matrix $\A$ in standard form of an $\SHF(8;8,2,\{1,3\})$
is a permutation matrix of degree 8.
\end{thm}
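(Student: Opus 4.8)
The plan is to reduce everything to showing that $\A$ must contain a row of type $1$. Since $N-1 = 7 = 2\cdot 3+1$, Theorem~\ref{tight_1} (applied with $w=3$, $N=7$) says that every $\SHF(7;7,2,\{1,3\})$ in standard form is a permutation matrix of degree $7$; hence the hypothesis of Lemma~\ref{type_1} is met for $w=3$, $N=8$, and once we know $\A$ has a row of type $1$ we may conclude from Lemma~\ref{type_1} that $\A$ is a permutation matrix. So the whole theorem comes down to the combinatorial statement that no $\SHF(8;8,2,\{1,3\})$ in standard form has all of its rows of type $\geq 2$.

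To prove this I would first record the per-row separation counts from Lemma~\ref{separate.lem} with $n=8$, $w=3$: a row of type $1,2,3,4$ separates $35,40,35,32$ column pairs, respectively, while the number of pairs that must be separated is $T=\binom{8}{3}\cdot 5=280$, i.e.\ exactly $35$ per row on average. Suppose for contradiction that $\A$ has no row of type $1$. If all rows had type at least $3$, then from $\sum_r s_r \ge 280$, $35n_3+32n_4 = 280-3n_4$ and $n_3+n_4=8$ we would get $n_4=0$; but with all eight rows of type $3$ we have $\sum_r s_r = 280$, so the eight rows would have to separate pairwise disjoint sets of column pairs, contradicting Lemma~\ref{overlapping}, which gives a strictly positive overlap $\theta$ for any two rows of type $3$ (for every value of the common-$1$ count $s\in\{0,1,2,3\}$). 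Therefore $\A$ has a row of type $2$; permuting rows and columns, let it be row $1$ with its two $1$'s in columns $1$ and $2$.

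Row $1$ then separates precisely the $40$ pairs $(\{1\},C_2)$ and $(\{2\},C_2)$ with $C_2\in\binom{\{3,\dots,8\}}{3}$, and separates no pair all of whose columns lie in $\{3,\dots,8\}$. Consequently the $7\times 6$ submatrix $\D$ on rows $2,\dots,8$ and columns $3,\dots,8$ is an $\SHF(7;6,2,\{1,3\})$. Since $\A$ is in standard form, every row of $\D$ has at most four $1$'s; bringing $\D$ into standard form (complement a row if it has four $1$'s) and applying Lemma~\ref{7r6c} forces at least six of the rows $2,\dots,8$ of $\A$ to have exactly one $1$ among columns $3,\dots,8$ — say rows $2,\dots,7$, with the unique such $1$ of row $r$ in column $c_r$ (a complemented row of $\D$ has two $1$'s there, so cannot be one of the six type-$1$ rows). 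Because $\A$ has no row of type $1$, each of rows $2,\dots,7$ must in addition carry a $1$ in column $1$ or in column $2$.

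The remaining step is a finite case analysis on the two ``heavy'' columns $1$ and $2$. Examining which rows can separate the pairs $(\{x\},\{1,2,y\})$ and $(\{x\},\{1,y,z\})$ with $x,y,z\in\{3,\dots,8\}$ shows that rows $1,\dots,7$ are almost always useless for them, so row $8$ must separate whole families of such pairs; this forces row $8$ either to be of type $1$ (contradicting our standing assumption) or to contain at least five $1$'s (contradicting standard form), unless rows $2,\dots,7$ fall into one of a small number of exceptional configurations, each of which is eliminated by the same column-counting. This yields the required contradiction, so $\A$ has a row of type $1$ and Lemma~\ref{type_1} finishes the proof. I expect this last case analysis — tracking how the single ``free'' row $8$ and the placement of the $1$'s of rows $2,\dots,7$ inside columns $1$ and $2$ interact — to be the main obstacle; the earlier steps are routine once the right column pairs are chosen.
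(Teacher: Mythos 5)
Your reduction is exactly the paper's: use Theorem~\ref{tight_1} to verify the hypothesis of Lemma~\ref{type_1}, rule out the ``no type $1$, no type $2$'' case by noting that all rows would then have to be of type $3$ and that Lemma~\ref{overlapping} gives $\theta>0$ for any two type-$3$ rows, and then peel off a type-$2$ first row to expose an $\SHF(7;6,2,\{1,3\})$ in rows $2,\dots,8$ and columns $3,\dots,8$, to which Lemma~\ref{7r6c} applies. Up to that point everything you write is sound and matches the paper's proof step for step.

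The genuine gap is your last step. You leave the conclusion to ``a finite case analysis on the two heavy columns $1$ and $2$,'' describe only its expected shape, and explicitly flag it as the main obstacle; as written, the proof is not complete. The paper closes this in one counting stroke using Lemma~\ref{overlapping}, and you already have all the ingredients in place to do the same: since $\A$ has no type-$1$ row, each of rows $2,\dots,7$ carries a $1$ in column $1$ or $2$ and hence overlaps row $1$ (which is type $2$ with its $1$'s in columns $1$ and $2$). Lemma~\ref{overlapping} with $n=8$, $w=3$ then gives $\theta=10$ when $(a_i,b_i)\in\{(1,0),(0,1)\}$ (so such a row separates at most $40-10=30$ new pairs) and $\theta=20$ when $(a_i,b_i)=(1,1)$ (at most $35-20=15$ new pairs). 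Hence rows $1,\dots,7$ separate at most $40+6\times 30=220$ pairs, leaving at least $280-220=60$ for row $8$, which is impossible since no row separates more than $40$. Replacing your proposed case analysis by this bound turns your sketch into the paper's proof; without it, the argument does not yet establish the theorem.
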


\begin{proof} 
Let $\A$ be the representation matrix in standard form 
of an $\SHF(8;8,2,\{1,3\})$.
Then there are
$T={8 \choose 3}5=280$ column pairs to be separated. 
On average,
each row separates $\frac{280}{8}=35$ column pairs.
From Lemma \ref{separate.lem}, a type 1 row separates
$35$ column pairs, a type 2 row separates $40$ column pairs,
a type 3 row separates $35$ column pairs, and a type 4 row 
separates $32$ column pairs.

Suppose that $\A$ contains a row of type 1. Then by 
Theorem \ref{tight_1} and Lemma \ref{type_1}, $\A$ is a permutation matrix.
Therefore we can  assume that $\A$ contains no row of type 1.
We next show that $\A$ contains a row of type 2. 
Assume the contrary and suppose without loss of generality that $\mu_1 \geq \max \{\mu_2, \dots ,\mu_8\}$.
Since, on average,
each row of $\A$ separates 35 column pairs, it follows that all rows must be of 
type 3. However, from Lemma \ref{overlapping}, it can be verified that any two rows of type
3 must separate a positive number of  common column pairs, so we have a contradiction.

Therefore, by permuting columns if necessary, we may assume that the first
row of $\A$ is of type 2, having the entry 1 in the first two columns. Thus we have
\begin{displaymath}
\A =
\left( \begin{array}{cc|ccccccc}
    1 & 1   & 0 & 0 & 0 & 0 & 0 & 0  \\ \hline
    a_2 & b_2   &    &    &    &    &   & \\
    a_3 & b_3   &    &    &    &    &   & \\
    a_4 & b_4   &    &    &    &    &   & \\
    a_5 & b_5   &    &    &   \B    &   & \\
    a_6 & b_6   &    &    &    &    &   & \\
    a_7 & b_7   &    &    &    &    &   & \\
    *   &  *    &    &    &    &    &   &      
    \end{array} \right)
\end{displaymath}
where $\B$ is the representation matrix of an $\SHF(7;6,2,\{1,3\})$.
By Lemma \ref{7r6c} we may assume
\begin{displaymath}
\B =
\left( \begin{array}{cccccc}
    1 & 0 & 0 & 0 & 0 & 0 \\
    0 & 1 & 0 & 0 & 0 & 0 \\
    0 & 0 & 1 & 0 & 0 & 0 \\
    0 & 0 & 0 & 1 & 0 & 0 \\
    0 & 0 & 0 & 0 & 1 & 0 \\
    0 & 0 & 0 & 0 & 0 & 1 \\ 
    * & * & * & * & * & *
    \end{array} \right).
\end{displaymath}
From the assumption that $\A$ has no row of type 1, we have that
$(a_i,b_i)\not= (0,0)$ for  $i=2,\ldots, 7$. It follows that rows
$2,3, \ldots, 7$ are of type 2 or 3 and each of them overlap row 1.
 From Lemma \ref{overlapping}, if $(a_i,b_i)=(1,1)$, then row $i$ separates at most $15$
new column pairs, whereas if $(a_i,b_i)=(1,0) \mbox{ or } (0,1)$,
then row $i$ separates at most $30$ new column pairs.
In any case, rows $1,2, \ldots, 7$ separate at most
$40+ 6 \times 30=220$ column pairs. Thus row 8 has to separate
at least $280-220=60$ new column pairs, which is impossible.
This completes the proof.  
\end{proof}

The next theorem follows immediately from Theorem \ref{tight_3} and 
Theorem \ref{perm_bound}.

\begin{thm} \label{bound_3}
Suppose there exists an $\SHF(8;n,2,\{1,3\})$. Then $n \leq 8.$
\end{thm}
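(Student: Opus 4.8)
The plan is to derive Theorem~\ref{bound_3} as a direct corollary of the structural result Theorem~\ref{tight_3} together with the general extension argument already packaged in Theorem~\ref{perm_bound}. First I would check that the hypotheses of Theorem~\ref{perm_bound} are met: we take $w = 3$ and $N = 8$, so $w \geq 3$ and $N \geq w+1$ hold, and Theorem~\ref{tight_3} asserts precisely that every $\SHF(8;8,2,\{1,3\})$ in standard form is a permutation matrix of degree $8$. Thus the ``permutation matrix'' premise of Theorem~\ref{perm_bound} is satisfied for these parameters.

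Given this, the conclusion is immediate: Theorem~\ref{perm_bound} states that if an $\SHF(N;n,2,\{1,w\})$ exists then $n \leq N$, so specializing to $N=8$, $w=3$ yields $n \leq 8$ for any $\SHF(8;n,2,\{1,3\})$. For completeness I would recall the shape of the argument inside Theorem~\ref{perm_bound}: suppose for contradiction that an $\SHF(8;9,2,\{1,3\})$ exists, call it $\A$; the submatrix $\B$ on the first $8$ columns can be put in standard form, hence is a permutation matrix of degree $8$, forcing every row of $\A$ to have at most two $1$'s; since $8/2 = 4 \geq 2$, the submatrix $\C$ on the last $8$ columns is also automatically in standard form, hence also a permutation matrix; but then columns $1$ and $9$ of $\A$ are forced to coincide (each row's unique $1$ among the first eight columns and among the last eight columns must line up), so the pair $(\{1\},\{9\})$ cannot be separated, a contradiction.

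Since the whole statement is a one-line application of two results proved earlier in the paper, there is essentially no obstacle; the only thing to be careful about is verifying the numerical side conditions of Theorem~\ref{perm_bound} (namely $N \geq w+1$, which here reads $8 \geq 4$) and confirming that Theorem~\ref{tight_3} is stated for the standard-form representative, which is exactly the form Theorem~\ref{perm_bound} invokes. Accordingly the proof I would write is simply: ``By Theorem~\ref{tight_3}, every $\SHF(8;8,2,\{1,3\})$ in standard form is a permutation matrix of degree $8$; applying Theorem~\ref{perm_bound} with $w=3$ and $N=8$ gives $n \leq 8$.''
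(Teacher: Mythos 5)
Your proposal is correct and is exactly the paper's own argument: the paper derives Theorem~\ref{bound_3} immediately from Theorem~\ref{tight_3} combined with Theorem~\ref{perm_bound}, just as you do. Your verification of the side conditions and your recap of the internal argument of Theorem~\ref{perm_bound} are both accurate.
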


\subsection{The case  $w=3$ and $N=9$ }

We first prove several preliminary lemmas.
\begin{lem} \label{7r5c}
Let $\A$ be the representation matrix in standard form 
of an $\SHF(7; 5, 2, \{1,3\})$
Then by permuting the rows of $\A$ we have that the first 5 rows
are of type 1 and the last two row are of any type.
\end{lem}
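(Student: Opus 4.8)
The plan is to mimic the counting arguments used in Lemmas~\ref{6r5c} and \ref{7r6c}, adapted to the parameters $N=7$, $n=5$, $w=3$. First I would compute the total number of column pairs to be separated: $T = {5 \choose 3}(5-3) = 20$, so that the average number of column pairs separated per row over the $7$ rows is $20/7 < 3$. By Lemma~\ref{separate.lem}, a row of type $1$ separates $1 \cdot {4 \choose 3} = 4$ column pairs and a row of type $2$ separates $2 \cdot {3 \choose 3} = 2$ column pairs (and rows of type $\geq 3$ are ruled out in standard form since $5/2 < 3$, so only types $1$ and $2$ occur). Since a row of type $1$ is strictly more productive than a row of type $2$, and since moreover two rows of type $1$ can potentially separate disjoint pair-sets while a type-$2$ row overlapping an earlier type-$1$ row contributes almost nothing new, the intuition is that the ``efficient'' rows must be type $1$ and disjoint.

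The key steps, in order: (1) order the rows so that $\mu_1 \geq \mu_2 \geq \cdots \geq \mu_7$ in the incremental-counting sense used before Lemma~\ref{6r5c}; (2) argue that the first five rows must be of type $1$. For this I would proceed greedily as in Lemma~\ref{6r5c}: row $1$ separates at most $4$ new pairs, so $\mu_1 = 4$ forces type $1$; then row $2$ must separate at least $\lceil (20-4)/6 \rceil = 3$ new pairs, which a type-$2$ row overlapping row~$1$ cannot do (by Lemma~\ref{overlapping}, the overlap count shows a type-$2$ row overlapping a type-$1$ row contributes at most $1$ new pair), and a non-overlapping type-$2$ row still only gives $2$, so row~$2$ is type $1$; continuing, after $k$ type-$1$ rows we have $20 - 4k$ pairs left to cover by $7-k$ rows, needing $\lceil (20-4k)/(7-k) \rceil$ new pairs from the next row, and one checks this exceeds what a type-$2$ row can newly contribute for $k = 1,2,3,4$, forcing rows $2,3,4,5$ to be type $1$ as well; (3) once five type-$1$ rows are in place, if they are pairwise disjoint they already separate all $20$ pairs (five disjoint weight-$1$ columns realize every $(\{c\},C_2)$), and the remaining two rows are unconstrained and may be of any type; if some two of the first five type-$1$ rows coincide, then I would reshuffle — note that a repeated type-$1$ row is wasteful, so to separate all $20$ pairs we still need five \emph{distinct} type-$1$ columns among the rows, which (since the matrix does separate everything) must appear somewhere among the seven rows, and we permute them to the top.

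I expect the main obstacle to be step~(2)'s endgame: when only a few pairs remain, the required ``new'' count from the next row may dip to $1$ or $2$, which a carefully placed type-$2$ row could in principle meet, so the greedy forcing argument is tightest there. The resolution is the same device used in Lemma~\ref{6r5c}: a type-$2$ row, to separate two \emph{new} pairs, would have to be disjoint from all previously chosen type-$1$ rows, but then its two $1$-entries sit in columns not yet ``covered'' by a type-$1$ row — and one shows directly that in the case $n=5$ this forces a column configuration incompatible with $\A$ actually separating all pairs involving those columns (a short case check on at most five columns). In fact the cleanest route may be to first establish that any $\SHF(7;5,2,\{1,3\})$ in standard form must contain at least five distinct type-$1$ rows, by the aggregate bound: if at most four distinct type-$1$ columns occur among the rows, the best the seven rows can do is $4 \cdot 4 + (\text{small correction}) < 20$, contradiction; once five distinct type-$1$ rows are known to be present, permute them to be rows $1$--$5$ and the last two rows are free. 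This aggregate version sidesteps the delicate incremental bookkeeping entirely, and is my preferred line of attack.
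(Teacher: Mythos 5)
Your preferred ``aggregate'' route is essentially the paper's own proof: the paper lets $x$ be the number of pairwise disjoint type-1 rows and rules out $x\leq 2$ ($\leq 2\cdot 4+5\cdot 2=18<20$), $x=4$ (each further row adds at most one new pair, so $\leq 16+3=19<20$), and $x=3$ (one special type-2 row adds two new pairs, all others at most one, so $\leq 12+2+3=17<20$), forcing $x=5$. Your ``small correction'' is exactly this sub-case computation via Lemma~\ref{overlapping}, and your fallback correctly sidesteps the endgame weakness of the greedy ordering you identify; the argument is sound.
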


\begin{proof} There are $2 \binom{5}{3} = 20$ column pairs to be separated.
A type 1 row separates four column pairs
 and a type 2 row separates two column pairs. 
 Let $x$ denote the number of disjoint type 1 rows;
 we claim that $x = 5$.
 If $x \leq 2$, then the number of column pairs that are separated
 is at most $2 \times 4 + 5 \times 2 = 18 < 20$, which is a contradiction.
 
 Suppose $x=4$. Four disjoint type 1 rows separate 16 column pairs.
 Due to overlap, any type two row  separates at most one additional 
 column pair. This means that at least four type two rows are required
 so that all column pairs are separated. This yields eight rows, which is
 a contradiction.
 
 Finally, we suppose $x=3$. Three disjoint type 1 rows separate 12 column pairs.
 There is one possible type 2 row that separates two additional column pairs,
 and any other type two row  separates at most one additional 
 column pair. It follows that we cannot separate all the column pairs using 
 seven rows. 
\end{proof}

\begin{lem} \label{8r6c}
Let $\A$ be the representation matrix in standard form 
of an $\SHF(8; 6, 2, \{1,3\})$
Then by permuting the rows of $\A$ we have that the first six rows
are of type 1 and the last two rows are of any type.
\end{lem}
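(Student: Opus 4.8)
The statement to be proved is Lemma \ref{8r6c}: in an $\SHF(8;6,2,\{1,3\})$ in standard form, after permuting rows we may assume the first six rows are of type 1 (and the last two are arbitrary). This mirrors Lemma \ref{7r5c} and Lemma \ref{7r6c}, so I would follow the same counting template used there, but now with the parameters $N=8$, $n=6$, $w=3$. First I would compute the total number of column pairs to be separated: $T = \binom{6}{3}\cdot 3 = 60$. By Lemma \ref{separate.lem}, a type 1 row separates $1\cdot\binom{5}{3} = 10$ pairs, a type 2 row separates $2\cdot\binom{4}{3} = 8$ pairs, and a type 3 row separates $3\cdot\binom{3}{3} + \binom{3}{3}\cdot 3 = 6$ pairs; since $n=6$, standard form means no row has more than $3$ ones, so these are the only possibilities. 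The average per row is $60/8 = 7.5$.

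**Key steps.** As in the proof of Lemma \ref{7r5c}, I would let $x$ be the maximum number of pairwise disjoint type 1 rows (equivalently, after permuting columns, rows that form a partial permutation on distinct columns), and argue $x = 6$ by ruling out $x \le 5$. The engine is Lemma \ref{overlapping}: once several disjoint type 1 rows are in place, any further row overlapping them contributes very few genuinely new pairs. Concretely, $x$ disjoint type 1 rows separate $10x$ pairs, covering $x$ of the $6$ columns; a type 2 row overlapping this configuration separates at most a small constant of new pairs (to be computed from \eqref{overlap.eq} with $i=1$, $j=2$, $s=1$, giving overlap term contributions that are tiny for $w=3$), and similarly for type 3. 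I would then do the case analysis $x = 5, 4, 3, \le 2$: in each case, bound the new pairs contributed by the remaining $8-x$ rows using Lemma \ref{overlapping}, and show the running total falls short of $60$, forcing a contradiction. For instance, $x \le 2$ gives at most $2\cdot 10 + 6\cdot 8 = 68$ naively, but the overlap constraint among the six type-$\ge 2$ rows (they cannot all be disjoint on only $6$ columns once two columns are "used", and two type 2 rows on $6$ columns overlap) brings this well below $60$; the tighter cases $x=3,4,5$ are handled by the same overlap accounting as in Lemma \ref{7r5c}. Having shown $x=6$, the six disjoint type 1 rows already separate all $60$ pairs, so the remaining two rows can be of any type, and permuting rows puts the six type 1 rows first.

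**Main obstacle.** The delicate part is the bookkeeping in the intermediate cases $x=3,4,5$: I must be careful that the "one special type 2 row separates a couple of extra pairs, every subsequent one separates at most one extra" phenomenon (exactly as exploited in Lemma \ref{7r5c} and Lemma \ref{7r6c}) is applied correctly here with $w=3$ and $n=6$, so that the row budget of $8$ is genuinely exceeded. I expect that when $x=5$, one column remains uncovered, and only very few extra pairs involving that column can be picked up by type $\ge 2$ rows before overlaps force redundancy, so more than $3$ extra rows would be needed — contradicting that only $3$ rows remain. The cases $x=4$ and $x=3$ are looser and follow the same lines with even more room to spare. If any residual case is too tight for pure averaging, I would fall back on a direct structural argument on the few uncovered columns exactly as in the $x=4$ and $x=3$ paragraphs of Lemma \ref{7r5c}.
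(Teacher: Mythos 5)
Your overall strategy (let $x$ be the maximum number of pairwise disjoint type~1 rows and rule out $x\le 5$ by overlap accounting) is not the paper's route, and as sketched it has a genuine gap in exactly the cases you dismiss as easy. For $x\le 2$ the naive bound does not fall below $60$: with $x=2$ the type~1 rows cover $20$ pairs and the six remaining rows can include three \emph{pairwise disjoint} type~2 rows (your parenthetical claim that ``two type 2 rows on 6 columns overlap'' is false --- $\{1,2\},\{3,4\},\{5,6\}$ are pairwise disjoint), so the first-order count is $20+3\cdot 8+3\cdot 7=65>60$; closing this case requires tracking overlaps with the type~1 rows and among all six type~2 rows on the four uncovered columns, which you have not done. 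Similarly, $x=4$ is not ``looser'': two uncovered columns leave $20$ pairs for $4$ rows, and the first-order maximum is $8+3\cdot 4=20$, i.e.\ exactly tight, so a second-order overlap computation (e.g.\ that a row with both $1$s in the uncovered columns and a row with one $1$ there share a pair, dropping the total to at most $17$) is indispensable. None of these computations appear in the proposal, so the argument is not complete as written.

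For comparison, the paper's proof is shorter and recursive: it first rules out ``no type~1 row'' by the counting argument (at least six type~2 rows would be needed, at most three of them disjoint, giving at most $3\cdot 8+3\cdot 7+2\cdot 6=57<60$), then deletes a type~1 row together with its $1$-column and applies Lemma~\ref{7r5c} to the resulting $\SHF(7;5,2,\{1,3\})$ to get an identity block, and finally shows the first-column entries $a,\dots,e$ of rows $2,\dots,6$ are all $0$ by a structural argument: if, say, $a=1$, the six pairs $(\{2\},\{1,y,z\})$ can only be rescued by a new type~1 row with its $1$ in column~$2$, and $x\ge 3$ nonzero entries would force more than $8$ rows. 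If you want to salvage your version, you would need to carry out the full overlap case analysis for $x=2,3,4,5$ (and handle duplicate type~1 rows); otherwise I recommend adopting the deletion-plus-Lemma~\ref{7r5c} recursion, which is what makes the paper's proof tractable.
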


\begin{proof}
There are in total $T={6 \choose 3}3=60$ column pairs 
of $\A$ to be separated. 
A type 1 row
separates ${5\choose 3}=10$ column pairs. A type 2 row 
separates $2{4\choose 3}=8$ columns pairs, but it separates
at most seven new column pairs if it is not disjoint from all the
1 and type 2 rows (Lemma \ref{overlapping}). 
Finally, a type 3 row
separates $3{3\choose 3}\times 2=6$ pairs,
but it separates at most five new column pairs if it is not disjoint from all the
type 1 rows (Lemma \ref{overlapping}).

First, suppose there is no row of type 1. In order to cover 
all the column pairs, we need at least six rows of type 2
(observe that $5 \times 8 + 3 \times 6 = 58 < 60$).
There are at most three disjoint rows of type two. Therefore
there are at least three rows of type two that each cover 
at most four new pairs. As well, there are two additional
rows that each cover at most six new column pairs.
The maximum number of column pairs that are covered is
$3 \times 8 + 3 \times 7 + 2 \times 6 = 57 < 60$, so 
we have a contradiction.

Thus we may assume that the first row of
$\A$ is of type 1, with entry 1 in the first column, so 
$\A$ has the form
\begin{displaymath}
\A =
\left( \begin{array}{c|ccccc}
    1 & 0 & 0 & 0 & 0 & 0  \\ \hline
    a &    &    &    &    &   \\
    b &     &    &    &    &   \\
    c &    &    &    &    &   \\
    d &    &    &   \B    &   \\
    e &    &    &    &    &   \\
    f &     &    &    &    &   \\
    g &    &    &    &    &    
    \end{array} \right)
\end{displaymath}
where $\B$ is the representation matrix of an $\SHF(7;5,2,\{1,3\})$.
By Lemma \ref{7r5c} we may assume
\begin{displaymath}
\B =
\left( \begin{array}{ccccc}
    1 & 0 & 0 & 0 & 0 \\
    0 & 1 & 0 & 0 & 0 \\
    0 & 0 & 1 & 0 & 0 \\
    0 & 0 & 0 & 1 & 0 \\
    0 & 0 & 0 & 0 & 1 \\
    * & * & * & * & * \\ 
    * & * & * & * & * 
    \end{array} \right)
\end{displaymath}
Let $x$ denote the number of $1$'s in the multiset $\{a,b,c,d,e\}$.
We want to show that $x=0$.

First suppose $x=1$ and  
assume without loss of generality that $a=1$. There are six column pairs not covered
by the first six rows, namely, $(\{2\}, \{1,y,z\})$, where
$\{y,z\} \subseteq \{3,4,5,6\}$. The only way that these six column pairs 
can be covered by two rows is if one of the two rows is of type 
$1$, having a $1$ in column $2$. Thus we have six rows of type 1 and  this case is done.

Next, suppose $x=2$ and  
assume without loss of generality that $a=b=1$. There are twelve column pairs not covered
by the first six rows, namely, $(\{2\}, \{1,y,z\})$, where
$\{y,z\} \subseteq \{3,4,5,6\}$
and $(\{3\}, \{1,y,z\})$, where
$\{y,z\} \subseteq \{2,4,5,6\}$. The only way to cover the first six column pairs 
by two rows is to include the row of type 
$1$, having a $1$ in column $2$. Further, the only way to cover the second six column pairs 
by two rows is to include the row of type 
$1$, having a $1$ in column $3$. Thus we have six rows of type 1 and  this case is done.

If $x \geq 3$, then we need $x$ additional rows of type 1 to cover the uncovered 
column pairs, but now the total number of rows is $6 + x > 8$. So these cases cannot occur,
and the proof is complete.
\end{proof}

\begin{lem} \label{8r7c}
Let $\A$ be the representation matrix of an $\SHF(8;7,2,\{1,3\})$
in standard form. By permuting the rows of $\A$ if necessary 
we have that the
first seven rows of $\A$ are of type 1. The last row
can be of any type.
\end{lem}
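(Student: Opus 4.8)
The strategy mirrors the proofs of Lemmas \ref{7r6c} and \ref{8r6c}: peel off one type-1 row, invoke the previous lemma (\ref{8r6c}) on the remaining $8 \times 6$ submatrix, and then rule out all configurations in which the peeled-off column of that row is nonzero below the first row. First I would count: there are $T = \binom{7}{3} \cdot 4 = 140$ column pairs to separate, so a row must separate $140/8 = 17.5$ on average. By Lemma \ref{separate.lem}, a type-1 row separates $\binom{6}{3} = 20$ pairs, a type-2 row separates $2\binom{5}{3} = 20$ pairs, a type-3 row separates $3\binom{4}{3} + 3 = 15$ pairs (using $i \geq w$), and higher types even fewer. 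So the only candidates for ``efficient'' rows are types 1 and 2. I would first dispose of the case that $\A$ has no type-1 row at all: if every row is of type $2$ or more, then some two type-2 rows must overlap (there are only $\lfloor 7/2 \rfloor = 3$ pairwise-disjoint type-2 columns available among $7$ columns), and Lemma \ref{overlapping} shows overlapping type-2 rows share a positive number of separated pairs, so the total falls short of $140$. Hence $\A$ has a type-1 row; WLOG it is row $1$ with its $1$ in column $1$.

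Next, write $\A$ with row $1$ equal to $(1,0,0,0,0,0,0)$ and let $\B$ be the $8 \times 6$ submatrix obtained by deleting row $1$ and column $1$; then $\B$ is the representation matrix of an $\SHF(8;6,2,\{1,3\})$, so by Lemma \ref{8r6c} we may assume its first six rows are the $6\times 6$ identity and its last two rows are arbitrary. Thus $\A$ has the block form with first column entries $(1,a,b,c,d,e,f,g)^T$ below the known part, where rows $2,\dots,7$ of $\A$ restricted to columns $2,\dots,7$ form the identity. Let $x$ be the number of $1$'s among $a,b,c,d,e,f,g$ — more precisely among $a,b,c,d,e,f$, since rows $7,8$ of $\B$ are unconstrained and I will need to treat the last two rows of $\A$ with care. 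The goal is to show $x = 0$ for the first six of these, after which the first seven rows of $\A$ are all of type $1$ and separate everything, and row $8$ is free. The heart of the argument is: if $a = 1$ (say), the column pairs $(\{2\},\{1,y,z\})$ with $\{y,z\}\subseteq\{3,4,5,6,7\}$ are not separated by rows $1$–$7$ as currently configured, and there are $\binom{5}{2} = 10$ of them; the only single additional row that can separate all ten is a type-1 row with its $1$ in column $2$, and there is only one row left (row $8$). So $x \leq 1$ forces, via row $8$, that we actually get seven type-1 rows after relabelling, and $x \geq 2$ needs at least two extra type-1 rows, exceeding the eight available. The bookkeeping with rows $7$ and $8$ of $\A$ (which inherit arbitrary entries from $\B$) is the place where I must be careful: I should argue that rows $7$ and $8$, having entries in columns $2,\dots,7$ that are not all zero unless they are genuine identity-type rows, cannot simultaneously be made type-2-or-higher and still cover the deficit.

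\textbf{Main obstacle.} The delicate point is handling the last two rows of $\B$ (equivalently rows $7$ and $8$ of $\A$), whose restrictions to columns $2,\dots,7$ are only known to make $\B$ a valid $\SHF(8;6,2,\{1,3\})$ — they need not be identity rows. I expect the cleanest route is: after fixing the identity in rows $2$–$7$ of $\B$, re-examine which column pairs of $\A$ are genuinely uncovered as a function of $x$ and of the entries of rows $7,8$, and show that the number of uncovered pairs always exceeds what one leftover row can separate unless $x = 0$ for the ``identity block'' rows; then a final permutation of rows moves any stray type-1 row into the first seven positions. A careful case split on whether rows $7$ and $8$ are themselves type $1$ (and in which column their $1$ sits) should close this, exactly in the spirit of the $x=1,2,\geq 3$ trichotomy used in Lemma \ref{8r6c}, just with one more row to track.
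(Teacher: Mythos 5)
Your overall strategy is the paper's: establish that a type-1 row exists, peel it off, invoke the preceding lemma in the chain on the submatrix, and force the remaining first-column entries to vanish (with at most one exception compensated by the last row). But two concrete problems remain. First, the reduction is mis-stated: deleting the first row and first column of the $8\times 7$ matrix yields a $7\times 6$ matrix, i.e.\ an $\SHF(7;6,2,\{1,3\})$, so the lemma to invoke is Lemma~\ref{7r6c}, not Lemma~\ref{8r6c}; your ``$8\times 6$ submatrix'' with \emph{two} unconstrained rows does not exist. This slip is what generates the ``main obstacle'' you describe: with the correct reduction there is exactly one unconstrained row (row $8$ of $\A$), and if some $a_i=1$, say $a_2=1$, the ten uncovered pairs $(\{2\},\{1,y,z\})$ with $\{y,z\}\subseteq\{3,\dots,7\}$ force that single row to be type $1$ with its $1$ in column $2$ (any $1$ elsewhere kills one of the ten pairs, and the all-ones alternative violates standard form). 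That closes the argument exactly as in the paper; the delicate bookkeeping you defer is not actually needed.

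Second, your argument that a type-1 row must exist has a gap. In standard form every row has type at most $3$, and by Lemma~\ref{separate.lem} a type-3 row separates $3\binom{4}{3}+\binom{3}{3}(7-3)=16$ pairs, not $15$ (you dropped the factor $n-i=4$). Your only stated obstruction is that two type-2 rows must overlap, but this fails when there are exactly three pairwise disjoint type-2 rows and five type-3 rows: the naive count is then $3\cdot 20+5\cdot 16=140$, exactly the target, so no contradiction comes from type-2 overlaps alone. You need the further observation (which the paper makes) that three disjoint type-2 rows occupy six of the seven columns, so every type-3 row overlaps some type-2 row and, by Lemma~\ref{overlapping}, loses at least one pair, pushing the total strictly below $140$; the cases with at most two type-2 rows are disposed of by $2\cdot 20+6\cdot 16=136<140$.
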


\begin{proof}
There are $T={7 \choose 3}4 =140$ column pairs of $\A$ to be separated. 
A type 1 row separates
$20$ column pairs, a type 2 row separates 
$20$ column pairs, and a type 3 row separates 
$16$ column pairs. Further, if a type 2 row is not disjoint from all other
type two rows, then it separates at most 16 new column pairs (Lemma \ref{overlapping}).

First we show that there must be a row of type 1. 
Suppose not; then there are $x$ rows of type 2 and
$8-x$ rows of type $3$. Since $2 \times 20 + 6 \times 16 = 136 < 140$,
we must have $x \geq 3$. Now, there can be at most three disjoint
rows of weight 2, so the  number of column pairs
covered is at most $3 \times 20 + (x-3)16 + (8-x)16 = 140$.
Then, in order for all 140 column pairs to be separated, each row of type
3 must be disjoint from all rows of type 2 (Lemma \ref{overlapping}), which is impossible.

Therefore,  we may assume that the first row of $\A$ is of type 1 with entry 1
in the first column. Thus $\A$ has the form
\begin{displaymath}
\A =
\left( \begin{array}{c|cccccc}
    1 &    0 & 0 & 0 & 0 & 0 & 0  \\ \hline
    a_2 &     &    &    &    &   & \\
    a_3 &     &    &    &    &   & \\
    a_4 &     &    &    &    &   & \\
    a_5 &     &    &   \B    &   & \\
    a_6 &     &    &    &    &   & \\
    a_7 &     &    &    &    &   & \\
    b   &     &    &    &    &   &      
    \end{array} \right)
\end{displaymath}
where $\B$ is the representation matrix of an $\SHF(7;6,2,\{1,3\})$.
By Lemma \ref{7r6c} we have that
\begin{displaymath}
\B =
\left( \begin{array}{cccccc}
    1 & 0 & 0 & 0 & 0 & 0 \\
    0 & 1 & 0 & 0 & 0 & 0 \\
    0 & 0 & 1 & 0 & 0 & 0 \\
    0 & 0 & 0 & 1 & 0 & 0 \\
    0 & 0 & 0 & 0 & 1 & 0 \\
    0 & 0 & 0 & 0 & 0 & 1 \\ 
    * & * & * & * & * & *
    \end{array} \right)
\end{displaymath}
 To prove the lemma we show that $a_2 = \cdots = a_7 = 0$.
 
Suppose some $a_i$ is nonzero, say $a_2 = 1$.
The column pairs not covered
by the first seven rows, include the following ten column pairs: $(\{2\}, \{1,y,z\})$, where
$\{y,z\} \subseteq \{3,4,5,6,7\}$. The only way that these ten column pairs 
can be covered by two rows is if one of the two rows is of type 
$1$, having a $1$ in column $2$. Thus we must have a row of type 1 
whose nonzero entry is in column $2$. The above argument can be applied
for any $a_i = 1$, which completes the proof.
\end{proof}

We are now in a position to prove the following theorem.
\begin{thm}\label{tight_4}
The representation matrix $\A$ in standard form of an 
$\SHF(9;9,2,\{1,3\})$ is a permutation matrix.
\end{thm}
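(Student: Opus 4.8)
The plan is to follow the template of the proof of Theorem~\ref{tight_3} for $N=8$. Let $\A$ be the representation matrix in standard form of an $\SHF(9;9,2,\{1,3\})$, so every row has type in $\{1,2,3,4\}$, and by Lemma~\ref{separate.lem} a row of type $1,2,3,4$ separates $56,70,66,60$ column pairs respectively; the total number of pairs to be separated is $\T=\binom{9}{3}\cdot 6=504$, so no single row can separate more than $70$ of them. If $\A$ has a row of type~$1$, then since every $\SHF(8;8,2,\{1,3\})$ in standard form is a permutation matrix (Theorem~\ref{tight_3}), Lemma~\ref{type_1} applies and $\A$ is a permutation matrix. For the rest of the argument I would assume, for contradiction, that $\A$ has no row of type~$1$; it then suffices to derive a contradiction, since this shows a type-$1$ row always exists.

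First I would rule out rows of type~$2$. Suppose some row is of type~$2$; after permuting columns say it is $1^2 0^7$ with $1$'s in columns $1,2$. Deleting this row and columns $1,2$ produces an $\SHF(8;7,2,\{1,3\})$, call it $\B$ (the deleted row separates only pairs meeting columns $1,2$, so nothing inside the remaining $7$ columns is lost). After flipping the few rows of $\B$ with four $1$'s --- these can only come from type-$4$ rows of $\A$ --- to put $\B$ in standard form, Lemma~\ref{8r7c} gives that seven of the eight rows of $\B$ are of type~$1$. These seven rows are not flipped rows, so they are rows of $\A$ with exactly one $1$ among columns $3$--$9$, and since $\A$ has no type-$1$ row each has at least one $1$ among columns $1,2$; hence each is of type~$2$ or~$3$ and overlaps the chosen row~$1$. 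By Lemma~\ref{overlapping} such a row shares at least $20$ column pairs with row~$1$, so it contributes at most $70-20=50$ new pairs. Therefore row~$1$ together with these seven rows separates at most $70+7\cdot 50=420$ pairs, and the remaining row adds at most $70$ more, for a total of at most $490<504$, a contradiction. An entirely analogous argument --- peel a type-$3$ row together with its three columns to obtain an $\SHF(8;6,2,\{1,3\})$, apply Lemma~\ref{8r6c}, and use Lemma~\ref{overlapping} --- rules out rows of type~$3$; there the maximal total is $66+6\cdot 46+2\cdot 66=474<504$.

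The remaining case, in which every row of $\A$ has type~$4$, is the main obstacle, because row-by-row counting gives only $9\cdot 60=540\ge 504$ and is therefore insufficient. Here I would switch to a block viewpoint: regard the nine rows as nine $4$-subsets (``blocks'') of the nine columns, and call a $3$-subset $C$ of columns \emph{blockless} if no block contains it. The nine blocks contain together at most $9\binom{4}{3}=36$ of the $\binom{9}{3}=84$ triples, so at least $48$ triples are blockless. If $C$ is blockless, then for each of the six columns $x\notin C$ the pair $(\{x\},C)$ can be separated only by a block that contains $x$ and is disjoint from $C$; every such block is a $4$-subset of the six columns outside $C$, and these blocks must jointly cover all six of those columns, which forces at least two of them to be disjoint from $C$. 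On the other hand, counting the incidences between blocks and the triples disjoint from them gives exactly $9\binom{5}{3}=90$; but the blockless triples alone account for at least $2\cdot 48=96>90$ such incidences, a contradiction. Hence $\A$ must have a row of type~$1$, and by the first step $\A$ is a permutation matrix.

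I expect the clean double count $9\binom{5}{3}=90$ against $2\cdot 48=96$ to be the crux of the proof; the elimination of type-$2$ and type-$3$ rows is routine in outline but needs care in handling the standard-form flips when peeling off a row and its columns, and in reading off the relevant overlap numbers from Lemma~\ref{overlapping}.
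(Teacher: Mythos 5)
Your proof is correct, and for the hardest case it takes a genuinely different and shorter route than the paper. The reduction to showing a type-1 row exists (via Lemma \ref{type_1} together with Theorem \ref{tight_3}) and the elimination of type-2 and type-3 rows -- peel off such a row and its support columns, apply Lemma \ref{8r7c} or Lemma \ref{8r6c} to the resulting $\SHF(8;7,2,\{1,3\})$ or $\SHF(8;6,2,\{1,3\})$, then bound the new pairs contributed by each overlapping row via Lemma \ref{overlapping} -- match the paper's Cases 1 and 2 almost exactly: your totals $490<504$ and $474<504$ are the paper's ``row 9 would need $84$'' and ``rows 8 and 9 would need $162$'' restated, and your extra care about re-standardizing the peeled submatrix is a point the paper glosses over. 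The genuine departure is the all-type-4 case. The paper computes the repeated-pair counts for every possible overlap, enumerates the non-isomorphic $3$-row and $4$-row configurations with pairwise overlap $2$, and closes with a case analysis on the ordered increments $\mu_1\geq\cdots\geq\mu_9$ showing $\sum\mu_i\leq 503<504$ in each subcase. Your double count replaces all of that: at least $84-9\binom{4}{3}=48$ triples lie in no block; since a row separates $(\{x\},C)$ for a blockless $C$ only if its block contains $x$ and misses $C$, and a single $4$-set cannot cover the six columns outside $C$, each blockless triple is disjoint from at least two blocks; but the total number of disjoint (block, triple) incidences is only $9\binom{5}{3}=90<2\cdot 48=96$. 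I verified each of these steps and the argument is sound. It is cleaner and avoids the isomorphism enumeration entirely; what the paper's version buys in exchange is an explicit description of the near-extremal row configurations, which is not needed for the theorem itself.
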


\begin{proof}
 There are in total $6 \binom{9}{3} = 504$ column pairs 
of $\A$ to be separated. 
A type 1 row
separates $56$ column pairs, a type 2 row
separates $70$ column pairs, a type 3 row
separates $66$ column pairs, and a type 4 row
separates $60$ column pairs. If a type 2 row overlaps another type two row,
then it separates at most 50 new column pairs, and if a 
type 3 row has overlap 2 with a type 2 row, then it separates 
at most 26 new column pairs (Lemma \ref{overlapping}).

 If $\A$ has a row of type 1, then by Lemma \ref{type_1}, $\A$
 is a permutation matrix and we are done.
 We will show that $\A$ must contain a row of type 1 by successively ruling out
 the cases that $\A$ contains a row of type 2, type 3 or type 4.

\begin{description}
\item[Case 1]  $\A$ contains row of type 2 but no rows of type 1.\mbox{\quad}
  
Assume w.l.o.g.\ that the first row of $\A$ is of type 2 with  
entry 1 in columns 1 and 2. By removing the first two columns 
and the first row of $\A$ we obtain an $8\times 7$ binary matrix
$\B$ which is the representation matrix of an $\SHF(8;7,2,\{1,3\})$.
By Lemma \ref{8r7c}, we may assume that the first seven rows of
$\B$ are of type 1. Here is the structure of the first eight rows of $\A$:
\begin{displaymath}
\left( \begin{array}{cc|ccccccc}
     1 & 1 &  0 & 0 & 0 & 0 & 0 & 0 & 0 \\ \hline
     * & * &  1 & 0 & 0 & 0 & 0 & 0 & 0 \\
     * & * &  0 & 1 & 0 & 0 & 0 & 0 & 0 \\
     * & * &  0 & 0 & 1 & 0 & 0 & 0 & 0 \\
     * & * &  0 & 0 & 0 & 1 & 0 & 0 & 0 \\    
     * & * &  0 & 0 & 0 & 0 & 1 & 0 & 0 \\
     * & * &  0 & 0 & 0 & 0 & 0 & 1 & 0 \\
     * & * &  0 & 0 & 0 & 0 & 0 & 0 & 1   
    \end{array} 
    \right)
\end{displaymath}

First note that rows $2, \ldots, 8$ must contain an entry equal to 1 in the first
two columns, since we are assuming that $\A$ has no rows of type 1.
This 
implies that rows $2, \ldots, 8$ are all of type 2 or 3 and they all
overlap row 1. 
If a type 2 row overlaps another type two row,
then it separates at most 50 new column pairs, and if a 
type 3 row has overlap 2 with a type 2 row, then it separates 
at most 26 new column pairs (Lemma \ref{overlapping}).
Therefore,  the first
eight rows of $\A$ can separate at most $70+ 50\times7=420$ column pairs.
Then the last row of $\A$ has to separate at least
$504-420=84$ column pairs, which is impossible. This rules out Case 1.      

\item[Case 2]  $\A$ contains row of type 3 but no rows of type 1 or 2.\mbox{\quad}
  
Assume that the first row of $\A$ is of type 3 with  
entry 1 in columns 1 and 2 and 3. By removing the first three columns 
and the first row of $\A$ we obtain an $8\times 6$ binary matrix
$\B$ which is the representation matrix of an $\SHF(8;6,2,\{1,3\})$.
By Lemma \ref{8r6c}, we may assume that the first six rows of
$\B$ are of type 1. Here is the structure of the first seven rows of $\A$:
\begin{displaymath}
\left( \begin{array}{ccc|cccccc}
     1 & 1 & 1  & 0 & 0 & 0 & 0 & 0 & 0  \\ \hline
     * & * & *  & 1 & 0 & 0 & 0 & 0 & 0  \\
     * & * & *  & 0 & 1 & 0 & 0 & 0 & 0  \\
     * & * & *  & 0 & 0 & 1 & 0 & 0 & 0  \\
     * & * & *  & 0 & 0 & 0 & 1 & 0 & 0  \\    
     * & * & *  & 0 & 0 & 0 & 0 & 1 & 0  \\
     * & * & *  & 0 & 0 & 0 & 0 & 0 & 1  \\   
    \end{array} \right)
\end{displaymath} 
Note that $\A$ only has rows of type 3 or 4. 
Let $i \in \{2,3,4,5,6,7\}$. 
If row $i$ is of type 3, then it has overlap 2 with row 1 and it 
separates at most $66-20=46$ new column pairs, and 
if row $i$ is of type 4, then it has overlap 3 with row 1 and separates at most $66-30=36$ column pairs
(Lemma \ref{overlapping}).
It follows that the first seven rows of $\A$ separate at most
$66+6\times 46 =342$ column pairs. Hence rows 8 and 9 have to separate
at least $504-342 = 162$ new column pairs, which is impossible. This rules out
Case 2.

\item[Case 3]  All rows of $\A$ are of type 4.\mbox{\quad}

Let $\alpha_i$ denote the number of repeated column pairs arising from 
two rows of type 4 having overlap equal to $i$. From Lemma \ref{overlapping}, we have 
$\alpha_0 = 32$, $\alpha_1 = 6$, $\alpha_2 = 2$, and $\alpha_3 = 16$.
So the maximum number of column pairs covered by any two rows is
$60 + 58 = 118$.

Let's now consider sets of three rows. A consideration of possible cases shows
that the maximum number of column pairs covered by three rows is
$60 + 58 + 56 = 174$. This happens if and only if the three rows have
pairwise overlaps all equal to $2$. There are in fact three non-isomorphic ways
in which this can happen:
\[
\begin{array}{ccc}
\begin{array}{c}
111100000\\
110011000\\
110000110
\end{array} & 
\begin{array}{c}
111100000\\
110011000\\
101010100
\end{array} & 
\begin{array}{c}
111100000\\
110011000\\
001111000
\end{array}
\end{array}
\]
The three cases are distinguished by the number of columns of weight 3.
 
Now let's look at the maximum number of column pairs obtained by extending
one of the three 3-row configurations enumerated above.
The maximum number of column pairs covered by four such rows is
$60 + 58 + 56 + 54 = 228$. This happens if and only if the four rows have
pairwise overlaps all equal to $2$. There are in fact four non-isomorphic ways
in which this can happen:
\[
\begin{array}{cccc}
\begin{array}{c}
111100000\\
110011000\\
110000110\\
101010100
\end{array} & 
\begin{array}{c}
111100000\\
110011000\\
101010100\\
100101100
\end{array} & 
\begin{array}{c}
111100000\\
110011000\\
101101000\\
011011000
\end{array} &
\begin{array}{c}
111000100\\
110110000\\
101101000\\
011011000
\end{array}
\end{array}
\]
 Now suppose the rows are ordered so $\mu_1 \geq \mu_2 \geq \dots \geq \mu_9$.
 We know that $\mu_1 = 60$ and $\mu_2 \leq 58$.
 We consider three cases and apply the results above.
 \begin{enumerate}
 \item If $\mu_1 = 60$, $\mu_2 = 58$ and $\mu_3 = 56$,  then $\mu_4 \leq 54$.
 Then \[\sum \mu_i \leq 60 + 58 + 56 + 6 \times 54 = 498 < 504.\] So this case is
 impossible.
 \item If $\mu_1 = 60$, $\mu_2 = 58$ and $\mu_3 \leq 55$, then 
 \[ \sum \mu_i \leq 60 + 58 + 7 \times 55 = 503 < 504.\] So this case is
also impossible.
\item If $\mu_1 = 60$ and $\mu_2 < 58$, then  $\mu_2 \leq 54$ and  
 \[\sum \mu_i \leq 60 + 8 \times 54 = 492 < 504.\]
 \end{enumerate}
 Since all cases lead to a contradiction, the proof is complete.
\end{description}
\end{proof}
 
\begin{thm}\label{bound_4}
Suppose there is an $\SHF(9;n,2,\{1,3\})$. Then $ n \leq 9$.
\end{thm}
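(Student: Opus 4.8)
The plan is to derive Theorem~\ref{bound_4} immediately from Theorem~\ref{tight_4} together with Theorem~\ref{perm_bound}, exactly in the spirit of how Theorem~\ref{bound_3} followed from Theorem~\ref{tight_3}. Theorem~\ref{tight_4} asserts that every $\SHF(9;9,2,\{1,3\})$ in standard form is a permutation matrix of degree $9$; and Theorem~\ref{perm_bound}, applied with $N = 9$, $w = 3$ (which satisfies $w \geq 3$ and $N \geq w+1$), states that if all $\SHF(N;N,2,\{1,w\})$ in standard form are permutation matrices, then any $\SHF(N;n,2,\{1,w\})$ has $n \leq N$. So the two hypotheses of Theorem~\ref{perm_bound} are met, and the conclusion $n \leq 9$ drops out.

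Concretely, I would write: suppose, for contradiction, that an $\SHF(9;n,2,\{1,3\})$ exists with $n \geq 10$; then in particular (by deleting columns) an $\SHF(9;10,2,\{1,3\})$ exists, call its matrix $\A$. Let $\B$ be the submatrix on the first $9$ columns of $\A$; after permuting $0$s and $1$s within rows we may assume $\B$ is in standard form, and by Theorem~\ref{tight_4} it is a permutation matrix, so every row of $\A$ has at most two $1$s. Since $9/2 \geq 2$, the submatrix $\C$ on the last $9$ columns is likewise in standard form, hence also a permutation matrix by Theorem~\ref{tight_4}. But then the first and the last columns of $\A$ (which lie in $\B$ and $\C$ respectively, and in fact in the overlap region columns $2,\dots,9$ both behave as parts of permutation matrices) force columns $1$ and $10$ to be identical, contradicting that the pair $(\{1\},\{10\})$ must be separated. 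This is precisely the argument embedded in the proof of Theorem~\ref{perm_bound}, so the cleanest write-up is simply to invoke that theorem.

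There is essentially no obstacle here: all the real work was done in Theorem~\ref{tight_4} (the three-case counting argument on row types) and in Theorem~\ref{perm_bound} (the ``double permutation matrix'' extension argument). The only point requiring a moment's care is checking that the numerical hypotheses of Theorem~\ref{perm_bound} are literally satisfied for $(N,w) = (9,3)$, namely $w \geq 3$ and $N \geq w+1$, both of which hold. I will therefore present the proof as a two-line deduction.

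\begin{proof}
By Theorem~\ref{tight_4}, every $\SHF(9;9,2,\{1,3\})$ in standard form is a permutation matrix of degree $9$. Since $w = 3 \geq 3$ and $N = 9 \geq w+1$, Theorem~\ref{perm_bound} applies and yields that any $\SHF(9;n,2,\{1,3\})$ satisfies $n \leq 9$.
\end{proof}
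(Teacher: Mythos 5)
Your proof is correct and takes exactly the same route as the paper, which likewise derives Theorem~\ref{bound_4} directly from Theorems~\ref{perm_bound} and~\ref{tight_4}. The two-line deduction you propose is precisely what the paper does.
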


\begin{proof}
 Theorem \ref{bound_4} follows from Theorems \ref{perm_bound} and 
  \ref{tight_4}.    
\end{proof}

\section{ Discussion of the case $w=2$ }
\label{w=2.sec}

For completeness, we include a discussion regarding the $w=2$ case.
Since $q = w$, some of the previously known results apply, and the
situation is much different from where $w \geq 3$.

\begin{thm} \label{Nplus1_shf}
For every $N \geq 3$, there exists an $\SHF(N;N+1,2,\{1,2\})$.
\end{thm}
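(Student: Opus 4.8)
The plan is to give an explicit construction and verify the defining property by a short case analysis. Take $\A$ to be the $N \times (N+1)$ binary matrix whose first $N$ columns are the columns of the identity matrix $I_N$ and whose last column is the all-ones vector; that is, $\A = (I_N \mid \mathbf{1})$. Equivalently, reading columns as codewords, the code consists of the $N$ standard unit vectors $e_1,\dots,e_N$ of length $N$ together with the all-ones word. I claim this $\A$ is the representation matrix of an $\SHF(N;N+1,2,\{1,2\})$.

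By definition I must show that for every column index $a$ and every pair of distinct column indices $b,d$ with $a\notin\{b,d\}$, there is a row $r$ that separates $(\{a\},\{b,d\})$; since the alphabet has only two symbols, this is equivalent to requiring $\A(r,a)\neq\A(r,b)=\A(r,d)$. I would split into three cases. Case (i): $a,b,d\le N$. Then row $a$ works, since it has a $1$ in column $a$ and $0$ in columns $b$ and $d$. Case (ii): $a\le N$ and exactly one of $b,d$ equals $N+1$ --- say $d=N+1$ and $b\le N$ (note $b\neq a$, since $a\le N$ and $a\notin\{b,d\}$). Then row $b$ works: it has entry $0$ in column $a$ and entry $1$ in both column $b$ and column $N+1$. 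Case (iii): $a=N+1$ and $b,d\le N$. Here I need a row with a $1$ in column $N+1$ and $0$'s in columns $b$ and $d$; any row $r\notin\{b,d\}$ does this, and such a row exists exactly because $N\ge 3$.

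I do not expect any genuine obstacle here: the construction is explicit and each case is a one-line verification. The only point where the hypothesis $N\ge 3$ is needed is Case (iii), where we require a row index lying outside a prescribed pair of rows. One could also phrase the same argument in the language of frameproof codes via Lemma~\ref{FPC-SHF}, checking that no unit vector lies in $\mathrm{desc}$ of two other codewords and that the all-ones word does not lie in $\mathrm{desc}$ of two unit vectors, but the matrix formulation above is the most direct.
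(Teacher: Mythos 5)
Your construction is identical to the paper's ($\A = (I_N \mid \mathbf{1})$) and your three-case verification matches the paper's own proof step for step, including the use of $N \geq 3$ only in the case where the singleton is the all-ones column. The proposal is correct.
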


\begin{proof}
Take the $N \times N$ identity matrix and append to it a column of 1s;
call this matrix $\A$. We will show that $\A$ is an $\SHF(N;N+1,2,\{1,2\})$.

Let $(C_1=\{x\},C_2=\{y,z\})$ be a column set pair. First consider
$1 \leq x \leq N$. If $1 \leq y,z \leq N$ then $(C_1,C_2)$ is clearly 
separated by $\A$. Suppose w.l.o.g. that $z = N+1$, then row $y$ has entry
1 in columns $y,z$ and entry 0 in column $x$, so $(C_1,C_2)$ is again
separated.

Finally, consider $x = N+1$, so $1 \leq y,z \leq N$. Since $N \geq 3$,
there is some row $w \notin \{y,z\}$, so row $w$ has entry 0 in columns
$y,z$ and entry 1 in column $x$, so $(C_1,C_2)$ is separated. 
\end{proof}

Theorem \ref{Nplus1_shf} above shows that Theorem \ref{bound_1} does not
hold when $w=2$. We will also demonstrate that Theorem \ref{tight_1} and
Theorem \ref{perm_bound} do not hold when $w=2$.

\begin{thm} \label{non_perm_shf}
The matrix
\begin{displaymath}
\A =
\left( \begin{array}{cccc}
     1 & 1 & 0 & 0 \\
     0 & 1 & 1 & 0 \\ 
     1 & 0 & 1 & 0 \\
     0 & 0 & 0 & 1 \\  
    \end{array} \right)
\end{displaymath}
is an $\SHF(4;4,2,\{1,2\})$.
\end{thm}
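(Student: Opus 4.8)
The plan is to verify directly that every column set pair $(C_1=\{x\},\,C_2=\{y,z\})$ with $\{x\}\cap\{y,z\}=\emptyset$ is separated by some row of $\A$; since the alphabet is $\{0,1\}$, a row $r$ separates such a pair precisely when $\A(r,y)=\A(r,z)$ and $\A(r,x)=1-\A(r,y)$. There are only $4\cdot\binom{3}{2}=12$ such pairs, so this is a finite check, but I would organize it by the role played by column $4$: either $x=4$, or $4\notin\{x,y,z\}$, or $4\in\{y,z\}$. The structural fact I would isolate first is that the $3\times 3$ submatrix $\B$ of $\A$ sitting on rows $1,2,3$ and columns $1,2,3$ has every row and every column of weight exactly $2$. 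Hence for each $j\in\{1,2,3\}$ there is a unique row $r(j)\in\{1,2,3\}$ with $\B(r(j),j)=0$, and, since that row has weight $2$ in $\B$, it satisfies $\B(r(j),i)=1$ for both $i\in\{1,2,3\}\setminus\{j\}$. Because $\B$ is literally a submatrix of $\A$, these values agree with the corresponding entries of $\A$.

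With this fact the three cases are short. If $x=4$, then $\{y,z\}\subseteq\{1,2,3\}$ and row $4$ has $\A(4,4)=1$ while $\A(4,y)=\A(4,z)=0$, so row $4$ separates. If $4\notin\{x,y,z\}$, then $\{x,y,z\}=\{1,2,3\}$; take $r=r(x)$, so $\A(r,x)=0$ and $\A(r,y)=\A(r,z)=1$, and $r$ separates. If $4\in\{y,z\}$, say $z=4$ with $x,y\in\{1,2,3\}$, then row $4$ is useless because $\A(4,x)=\A(4,y)=0$; however every row $r\in\{1,2,3\}$ has $\A(r,4)=0$, so such a row separates iff $\A(r,x)=1$ and $\A(r,y)=0$, and $r=r(y)$ does exactly this. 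Since each of the $12$ pairs falls into exactly one of these three cases, this shows $\A$ is an $\SHF(4;4,2,\{1,2\})$.

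I do not expect a genuine obstacle, as the claim is a bounded verification; the only point that needs a moment's care is the last case, where the column $(0,0,0,1)^{T}$ contributes nothing on row $4$, forcing us to discard row $4$ and instead use the weight-$2$ structure of $\B$, which is precisely what handles all six ordered pairs $(x,y)$ uniformly. One could also remark afterwards that $\A$ is already in standard form (rows have at most $n/2=2$ ones) and is not a permutation matrix, which is why it witnesses the failure of Theorem~\ref{tight_1} and Theorem~\ref{perm_bound} at $w=2$.
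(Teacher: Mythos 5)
Your proof is correct and complete: the twelve pairs $(C_1=\{x\},C_2=\{y,z\})$ are partitioned correctly by the role of column $4$, and the observation that every row and column of the $3\times 3$ block $\B$ has weight exactly $2$ does handle all three cases cleanly (in particular the row $r(y)$ with a $0$ in column $y$ and $1$s in the other two columns of $\{1,2,3\}$ settles the only nontrivial case, where $4\in C_2$). The paper actually states Theorem~\ref{non_perm_shf} with no proof at all, treating it as a routine finite verification, so there is nothing to compare against; your write-up is a valid and somewhat more structured substitute for the brute-force check, and your closing remark correctly identifies why $\A$ witnesses the failure of Theorem~\ref{tight_1} at $w=2$.
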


The result in Theorem \ref{non_perm_shf} can be extended to $N > 4$ by constructing
the matrix
\begin{displaymath}
\B =
\left( \begin{array}{cc}
     \A & 0 \\
     0 & I_k \\  
    \end{array} \right)
\end{displaymath}
where $\A$ is from Theorem \ref{non_perm_shf} and $I_k$ is the $k \times k$ identity
matrix for $k = N-4$. Observe that for every column $x,y \in \{1,2,3\}$, there exist
rows $r_x,r_y$ such that $r_x(x) = 1$, $r_x(y) = 0$ and $r_y(x) = 0$, $r_y(y) = 1$.
It is straightforward to verify that $\B$ is indeed an $\SHF(N;N,2,\{1,2\})$.
Theorem \ref{no_perm_bound} below covers the last
case $N = 3$, and shows that Theorem \ref{perm_bound} does hold when $w=2$.

\begin{thm} \label{no_perm_bound}
The representation matrix of an $\SHF(3;3,2,\{1,2\})$ in standard form is
a permutation matrix.
\end{thm}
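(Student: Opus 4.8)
The plan is to enumerate the possible standard-form matrices directly. An $\SHF(3;3,2,\{1,2\})$ is a $3 \times 3$ binary matrix $\A$, and standard form means every row has at most $3/2$, hence at most one, entry equal to $1$. So each row is either $000$ or has exactly one $1$. As observed earlier in the paper, we may assume no row is all-$0$ (replace it by a type-$1$ row). Thus every row of $\A$ has exactly one entry equal to $1$; that is, $\A$ has exactly three $1$s, one per row.

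The only thing left to rule out is that two rows have their $1$ in the same column, since then some column is all-$0$ and another column appears twice (or at least two columns are equal), breaking separation. Concretely, the column set pairs to be separated are $(\{x\},\{y,z\})$ with $\{x,y,z\}=\{1,2,3\}$; a row separates $(\{x\},\{y,z\})$ only if it has a $1$ in column $x$ and $0$s in columns $y,z$, i.e. only a type-$1$ row whose unique $1$ is in column $x$ works. Hence for each of the three columns $x$, there must be a row whose $1$ lies in column $x$. Since $\A$ has exactly three rows, each carrying exactly one $1$, this forces the three $1$s to lie in three distinct columns — i.e. $\A$ is a permutation matrix of degree $3$. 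Conversely any such permutation matrix is an $\SHF(3;3,2,\{1,2\})$.

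I would write this as: first invoke the standard-form reduction to conclude every row is type $1$; then argue column $x$ needs a dedicated type-$1$ row for each $x\in\{1,2,3\}$; then count rows to get that the map (row $\mapsto$ column of its $1$) is a bijection. There is essentially no obstacle here — the statement is a small finite check and the main point is just to phrase the separation requirement cleanly. The only thing to be slightly careful about is the convention about replacing all-$0$ rows, but that is explicitly licensed by the earlier remark in the paper, so it can simply be cited.
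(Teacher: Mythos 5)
Your proof is correct and follows essentially the same route as the paper: standard form forces every row to be of type~1 (after discarding all-zero rows via the earlier remark), and the separation requirement for each pair $(\{x\},\{y,z\})$ forces the three 1s into distinct columns. The paper states this in two sentences; your version merely spells out the same counting argument in more detail.
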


\begin{proof}
In standard form, every row is of type 1. Two distinct rows must not overlap,
so each column also has one 1. 
\end{proof}

\section{Conclusion}
\label{conclusion.sec}

Gathering together the results proven in this paper, we have the following theorems.

\begin{thm}
Let $w$, $N$ be positive integers such that $w \geq 3$
and $w+1 \leq N \leq 3w$. Suppose there exists an
$\SHF(N; n, 2, \{1,w\})$. Then  $n \leq N.$
\end{thm}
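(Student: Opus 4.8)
The plan is to observe that this final theorem is simply the union of the bounds established throughout the paper, so the proof amounts to a case analysis on $N$ relative to $w$, invoking the appropriate earlier result in each range. Fix $w \geq 3$ and $w+1 \leq N \leq 3w$, and suppose an $\SHF(N;n,2,\{1,w\})$ exists.

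First I would dispose of the range $w+1 \leq N \leq 2w+1$. This is exactly the hypothesis of Theorem \ref{bound_1}, which gives $n \leq N$ directly, with no restriction on $w$ beyond $w \geq 3$. So this case is immediate.

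Next, consider $2w+2 \leq N \leq 3w$ (this range is nonempty only when $w \geq 2$, and here $w \geq 3$). If $w \geq 4$, then Theorem \ref{bound_2} applies verbatim and yields $n \leq N$. The only remaining situation is $w = 3$, in which case $2w+2 = 8$ and $3w = 9$, so $N \in \{8,9\}$. If $N = 8$, Theorem \ref{bound_3} gives $n \leq 8$; if $N = 9$, Theorem \ref{bound_4} gives $n \leq 9$. In every case $n \leq N$, completing the proof.

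There is no genuine obstacle here: the theorem is a bookkeeping statement that stitches together Theorems \ref{bound_1}, \ref{bound_2}, \ref{bound_3}, and \ref{bound_4}, whose union of hypotheses is precisely $\{\,(w,N) : w \geq 3,\ w+1 \leq N \leq 3w\,\}$. The only point worth double-checking is that the four ranges genuinely cover this set with no gap — in particular that the case $w=3$, $N \in \{8,9\}$ (which falls outside the $w\geq 4$ hypothesis of Theorem \ref{bound_2}) is handled separately by Sections \ref{exceptions.sec}, which it is.
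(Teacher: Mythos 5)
Your proposal is correct and matches the paper's intent exactly: the theorem is stated in the conclusion as a synthesis of Theorems \ref{bound_1}, \ref{bound_2}, \ref{bound_3}, and \ref{bound_4}, and your case split on $N$ (with the $w=3$, $N\in\{8,9\}$ exception handled by Section \ref{exceptions.sec}) is precisely the decomposition the paper relies on. Nothing further is needed.
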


\begin{thm}
Let $w$, $N$ be positive integers such that $ w \geq 3$
and $w+1 \leq N \leq 3w$. Suppose there exists an
$\SHF(N; n, 2, \{1,w\})$ with $n=N$. Then its representation
matrix in standard form is a permutation matrix of degree $N$.
\end{thm}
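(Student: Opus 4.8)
The plan is to reduce this statement entirely to results already established in the earlier sections, by partitioning the interval $w+1 \leq N \leq 3w$ into the sub-ranges that those results cover. First I would dispose of the range $w+1 \leq N \leq 2w+1$: for any $w \geq 3$ and $n = N$ in this range, Theorem \ref{tight_1} states exactly that the representation matrix of an $\SHF(N;N,2,\{1,w\})$ in standard form is a permutation matrix of degree $N$, so nothing further is needed.

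Next I would handle the range $2w+2 \leq N \leq 3w$. This range is non-empty only when $w \geq 3$ (it is empty for $w < 3$, but the hypothesis already forces $w \geq 3$). For $w \geq 4$, Theorem \ref{tight_2} gives the conclusion directly. The remaining case is $w = 3$, where $2w+2 = 8$ and $3w = 9$, so $N \in \{8,9\}$; these two values are precisely the content of Theorem \ref{tight_3} (for $N = 8$) and Theorem \ref{tight_4} (for $N = 9$), each of which asserts that the standard-form representation matrix of the corresponding $\SHF$ is a permutation matrix. Combining these cases covers every $N$ with $w+1 \leq N \leq 3w$, completing the proof.

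Since every sub-case is literally one of the previously proven theorems, there is no genuine obstacle here; the only point requiring a moment's care is the bookkeeping that the three sub-ranges $w+1 \leq N \leq 2w+1$, the $w \geq 4$ part of $2w+2 \leq N \leq 3w$, and the pair $(w,N) \in \{(3,8),(3,9)\}$ together exhaust all admissible $(w,N)$ with $w \geq 3$. The hard work — the counting arguments via Lemmas \ref{binomial}, \ref{separate.lem}, \ref{overlapping}, and the inductive extension Lemmas \ref{type_1}, \ref{ext_shf}, \ref{extension} together with the base-case Lemmas \ref{6r5c}--\ref{8r7c} — has all been carried out in the earlier sections, so the present theorem is purely an assembly statement.

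\begin{proof}
We split into cases according to the value of $N$. Since $w \geq 3$, the interval $[w+1, 3w]$ is the disjoint union of $[w+1, 2w+1]$ and $[2w+2, 3w]$.

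If $w+1 \leq N \leq 2w+1$, then by Theorem \ref{tight_1} the representation matrix in standard form of an $\SHF(N;N,2,\{1,w\})$ is a permutation matrix of degree $N$.

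If $2w+2 \leq N \leq 3w$, we distinguish two subcases. When $w \geq 4$, Theorem \ref{tight_2} shows that the representation matrix in standard form is a permutation matrix of degree $N$. When $w = 3$, we have $2w+2 = 8$ and $3w = 9$, so $N \in \{8,9\}$; the case $N = 8$ is Theorem \ref{tight_3} and the case $N = 9$ is Theorem \ref{tight_4}, each of which states that the representation matrix in standard form is a permutation matrix.

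In all cases the conclusion holds, which completes the proof.
\end{proof}
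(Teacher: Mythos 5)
Your proposal is correct and matches the paper's intent exactly: the theorem is stated in the conclusion as a consolidation of Theorems \ref{tight_1}, \ref{tight_2}, \ref{tight_3}, and \ref{tight_4}, and your case split ($w+1 \leq N \leq 2w+1$; $2w+2 \leq N \leq 3w$ with $w \geq 4$; and $(w,N) \in \{(3,8),(3,9)\}$) is precisely the exhaustive decomposition the paper relies on. Nothing further is needed.
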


Here is an interesting problem that is suggested by our work: 
For a given $w$, find the smallest $N$ such that there exists an
$\SHF(N; n,2, \{1,w \})$  with $n > N$. A closely related problem is
to find the smallest $n$ such that there exists an
$\SHF(n; n,2, \{1,w \})$  that is not a permutation matrix.
Finally, it may be of interest to try to generalize the results in this
paper to SHF of other types, or to SHF over non-binary alphabets.


\end{document}